\documentclass[nonacm,sigconf,natbib=true]{acmart}

\copyrightyear{2026}
\acmYear{2026}
\setcopyright{cc}
\setcctype{by}
\acmConference[ICTIR '26]{Proceedings of the 2026 International ACM SIGIR Conference on Innovative Concepts and Theories in Information Retrieval (ICTIR)}{July 25, 2026}{Melbourne, VIC, Australia}
\acmBooktitle{Proceedings of the 2026 International ACM SIGIR Conference on Innovative Concepts and Theories in Information Retrieval (ICTIR) (ICTIR '26), July 25, 2026, Melbourne, VIC, Australia}
\acmDOI{10.1145/3805713.3820417}
\acmISBN{979-8-4007-2600-2/2026/07}

\usepackage{amsthm}
\usepackage{appendix}
\usepackage{array}
\usepackage{balance}
\usepackage{bbm}
\usepackage{blindtext}
\usepackage{comment}
\usepackage{hyperref}
\usepackage{mathtools}
\usepackage{multirow}
\usepackage{relsize}
\usepackage[ruled,vlined]{algorithm2e}

\newcommand*{\E}[2][]{\mathbb{E}_{#1}\left[#2\right]}

\DeclareMathOperator{\ind}{\mathbbm{1}}
\newcommand*{\norm}[1]{\left\lVert#1\right\rVert}
\newcommand*{\abs}[1]{\left| #1 \right|}
\newcommand*{\rb}[1]{\left( #1 \right)}
\newcommand*{\bb}[1]{\left[ #1 \right]}
\newcommand*{\cb}[1]{\left\{ #1 \right\}}

\newcommand*{\maxOn}[2]{\underset{#1}{max}\{#2\}}
\newcommand*{\minOn}[2]{\underset{#1}{min}\{#2\}}
\newcommand*{\argmaxOn}[2]{\underset{#1}{argmax}\cb{#2}}
\newcommand*{\argminOn}[2]{\underset{#1}{argmin}\cb{#2}}
\DeclarePairedDelimiterX\set[1]\lbrace\rbrace{\def\given{\;\delimsize\vert\;}#1}
\newcommand*{\largemath}[1]{\ensuremath{\displaystyle #1}}

\newtheorem{definition}{Definition}
\newtheorem{proposition}{Proposition}  
\newtheorem{corollary}{Corollary}  
\newtheorem{observation}{Observation}  
\newtheorem{lemma}{Lemma}  
\newtheorem{theorem}{Theorem}  
\newtheorem{example}{Example}  
\renewcommand*{\S}{Section~}

\newcommand*{\game}{G}

\newcommand*{\modelName}{publisher game\xspace}
\newcommand*{\modelPlural}{publisher games\xspace}
\newcommand*{\relSt}{relevance strategy\xspace}
\newcommand*{\relPro}{relevance profile\xspace}

\newcommand*{\publishersSet}{N}
\newcommand*{\numPublishers}{n}

\newcommand*{\publisherIndex}{i}
\newcommand*{\publishersPer}{\Pi_\numPublishers}

\newcommand*{\embeddingSpaceBase}{\mathcal{\mathbb{X}}}
\newcommand*{\strategyProfile}{x}

\newcommand*{\embeddingSpaceAux}{X}
\newcommand*{\embeddingSpace}{\embeddingSpaceAux_\publisherIndex}
\newcommand*{\embeddingSpaceSpecific}[1]{\embeddingSpaceAux_{#1}}

\newcommand*{\dimE}{k}
\newcommand*{\strategy}{\strategyProfile_\publisherIndex}
\newcommand*{\strategyTilde}{\tilde{\strategyProfile}_\publisherIndex}
\newcommand*{\strategyTag}{\strategyProfile_{\publisherIndex'}}
\newcommand*{\strategyTagTag}{\strategyProfile_{\publisherIndex''}}
\newcommand*{\strategiesOthers}{\strategyProfile_{-\publisherIndex}}

\newcommand*{\aspect}{a}
\newcommand*{\aspectRep}{\strategyProfile_\aspect}
\newcommand*{\aspectRepSpecific}[1]{\strategyProfile_{\aspect_{#1}}}
\newcommand*{\aspectRepOne}{\aspectRepSpecific{1}}
\newcommand*{\aspectRepTwo}{\aspectRepSpecific{2}}
\newcommand*{\aspectDistribution}{{P^\aspect}}
\newcommand*{\aspectDistributionSupp}{\text{supp}(\aspectDistribution)}
\newcommand*{\aspectDistributionSuppSize}{\abs{\text{supp}(\aspectDistribution)}}
\newcommand*{\aspectE}[1]{\E[\aspect \sim \aspectDistribution]{#1}}
\newcommand*{\aspectEShort}[1]{\E[\aspectDistribution]{#1}}
\newcommand*{\aspectProb}{P_\aspectDistribution(\aspectRep)}
\newcommand*{\aspectProbSpecific}[1]{P_\aspectDistribution(\aspectRepSpecific{#1})}
\newcommand*{\aspectD}{d}
\newcommand*{\baseWeight}{w}
\newcommand*{\baseWeightSet}{W}
\newcommand*{\baseWeightSetAdjusted}{\baseWeightSet^*}
\newcommand*{\aspectWeight}{\baseWeight_\aspect}
\newcommand*{\aspectWeightAdjusted}{\aspectWeight^*}
\newcommand*{\aspectWeightSpecific}[1]{\baseWeight_{\aspect_{#1}}}
\newcommand*{\aspectWeightAdjustedSpecific}[1]{\aspectWeightSpecific{#1}^*}

\newcommand*{\rankFunction}{r}
\newcommand*{\rankFunctionTilde}{\tilde{\rankFunction}}
\newcommand*{\rankIndex}{j}
\newcommand*{\publisherRankLocation}{\rankFunction_\publisherIndex}
\newcommand*{\publisherRankLocationTag}{\rankFunction_{\publisherIndex'}}
\newcommand*{\publisherRankLocationSpecific}[1]{\rankFunction_{#1}}
\newcommand*{\publisherRankTildeLocation}{\rankFunctionTilde_\publisherIndex}
\newcommand*{\publisherRankTildeLocationTag}{\rankFunctionTilde_{\publisherIndex'}}
\newcommand*{\sumLocations}{\sum_{\rankIndex=1}^{\numPublishers}}
\newcommand*{\scoreFunc}{f}
\newcommand*{\partialRank}{L}
\newcommand*{\partialRankSpecific}[1]{\partialRank_{#1}}

\newcommand*{\locFunction}{l}

\newcommand*{\rankLocation}{\locFunction_\rankIndex}

\newcommand*{\rankLocationTag}{\locFunction_{\rankIndex'}}
\newcommand*{\rankLocationSpecific}[1]{\locFunction_{#1}}

\newcommand*{\simFunction}{s}
\newcommand*{\simFunctionGroup}{S}
\newcommand*{\simPublisher}{\simFunction_\publisherIndex^\aspect}
\newcommand*{\simPublisherTag}{\simFunction_{\publisherIndex'}^\aspect}
\newcommand*{\simPublisherTagTag}{\simFunction_{\publisherIndex''}^\aspect}

\newcommand*{\simPublisherSpecific}[1]{\simFunction_{#1}^\aspect}
\newcommand*{\simAllPublishersRank}{\simFunctionGroup^\aspect_\rankFunction}
\newcommand*{\simAllPublishersRankTilde}{\simFunctionGroup^\aspect_{\rankFunctionTilde}}
\newcommand*{\simLocation}{\simFunction_{\rankLocation}^\aspect}

\newcommand*{\simLocationTag}{\simFunction_{\rankLocationTag}^\aspect}
\newcommand*{\simExpected}{\simFunction_\mathbb{E}^\aspect}

\newcommand*{\userUtilityFunction}{v}
\newcommand*{\clickModel}{\userUtilityFunction}
\newcommand*{\publisherUtilityFunction}{u}
\newcommand*{\publisherUtility}{\publisherUtilityFunction_\publisherIndex}

\begin{document}

\title{Stability in Competitive Search with Results Diversification}

\author{Itamar Reinman}
\authornote{Corresponding Author.}
\affiliation{
  \institution{Technion - Israel Institute of Technology}
  \city{Haifa}
  \country{Israel}}
\email{itamarr@campus.technion.ac.il}
\orcid{0009-0003-4749-5004}

\author{Omer Madmon}
\affiliation{
  \institution{Technion - Israel Institute of Technology}
  \city{Haifa}
  \country{Israel}}
\email{omermadmon@campus.technion.ac.il}
\orcid{0009-0001-4009-0368}

\author{Moshe Tennenholtz}
\affiliation{
  \institution{Technion - Israel Institute of Technology}
  \city{Haifa}
  \country{Israel}}
\email{moshet@technion.ac.il}
\orcid{0000-0002-9459-5388}

\author{Oren Kurland}
\affiliation{
  \institution{Technion - Israel Institute of Technology}
  \city{Haifa}
  \country{Israel}}
\email{kurland@technion.ac.il}
\orcid{0000-0002-0669-0431}

\renewcommand{\shortauthors}{Itamar Reinman, Omer Madmon, Moshe Tennenholtz, and Oren Kurland.}

\begin{abstract}
In a competitive search setting, publishers strategically modify their documents in response to induced rankings so as to improve their future ranking. We present a novel game-theoretic analysis of a competitive search setting where search-results diversification is applied. Our analysis reveals an inherent tradeoff between corpus diversity and corpus stability, where the latter corresponds to an equilibrium in a game. We analyze two representative diversification methods and show that stability need not necessarily be reached, leaving the corpus to rapid changes due to ranking incentivized modifications of publishers. We then present a novel approach to devise diversification-based ranking functions that are guaranteed to lead to corpus stability.
\end{abstract}

\begin{CCSXML}
<ccs2012>
   <concept>
       <concept_id>10002951.10003317.10003338.10003345</concept_id>
       <concept_desc>Information systems~Information retrieval diversity</concept_desc>
       <concept_significance>500</concept_significance>
       </concept>
   <concept>
       <concept_id>10003752.10010070.10010099.10010100</concept_id>
       <concept_desc>Theory of computation~Algorithmic game theory</concept_desc>
       <concept_significance>500</concept_significance>
       </concept>
 </ccs2012>
\end{CCSXML}

\ccsdesc[500]{Information systems~Information retrieval diversity}
\ccsdesc[500]{Theory of computation~Algorithmic game theory}

\keywords{information retrieval; game theory; search results diversification}



\maketitle

\section{Introduction}\label{sec:intro}
There is \emph{competitive search} setting, with the Web being a canonical example, where some publishers (document authors) are ranking incentivized \cite{compSearch}. That is, they opt to have their documents highly ranked for specific queries; e.g., those with a commercial intent. As a result, these publishers might respond to induced rankings by modifying their documents. While some of the modifications applied by publishers can be black-hat (e.g., spamming) and hurt the search setting, our focus as in prior work on competitive search \cite{compSearch} is on white-hat content modifications; namely, modifications that do not degrade document quality and more generally do not hurt the search setting.

The publishers' ranking incentives often lead to a ranking competition over queries. There is a line of work that analyzed these ranking competitions using game theory \cite{compSearch}. Specifically, publishers are regarded as players, their actions are the documents they produce, and the ranking function is the mediator.

Reports of game theoretic analysis of ranking competitions (e.g., \cite{basat2017game,raifer2017information}) led to the basic realization that the competitive search setting is fundamentally different than the standard ad hoc setting --- a query used to rank a document corpus without accounting for post-retrieval corpus effects; specifically, in terms of theoretical foundations of information retrieval. A case in point, the probability ranking principle (PRP) \cite{robertson1977probability} was shown to be optimal for ranking documents. It serves as the theoretical underpinning of most retrieval methods. As it turns out, the PRP is sub-optimal in a competitive search setting as it leads to reduced topical diversity in the corpus \cite{basat2017game}. Furthermore, it was shown theoretically and empirically that publishers in ranking competitions tend to mimic content in documents highly ranked in the past \cite{raifer2017information}. This phenomenon of publisher \emph{herding} naturally has unwarranted consequences on the corpus \cite{Goren+al:21a}. Recent work then showed that the herding effect can be ameliorated by applying search-results diversification \cite{mordo2025ameliorating}.

While almost all previous work on modeling competitive search using game theory has focused on ranking solely based on estimated relevance, we present in this paper a novel rigorous game-theoretic modeling of diversification-based ranking. We analyze representative novelty-based and aspect-coverage search-results diversification methods \cite{santos2015search}. Our analysis shows that there is an inherent tradeoff between the extent to which results are diverse and corpus stability. Stability means that the search setting has reached an equilibrium where no publisher has an incentive to deviate from her strategy, namely, the document she produced. It turns out that in quite a few cases, search-results diversification methods do not lead to stability, leaving the corpus to rapid (unwarranted) changes as a result of ranking incentivized modifications.

Given that it is not necessarily guaranteed that existing results-diversification methods will lead to stability, we present a novel approach to devise diversity-based ranking functions with such provided guarantees.

\paragraph{Paper organization}
The paper is structured as follows. In \S\ref{sec:related work} we review related work. \S\ref{sec:ir setting} describes the diversity-based ranking functions we explore. In Sections \ref{sec:model} and \ref{sec:main results lex} we present a game theoretic analysis of the competitive search setting with the ranking functions, including stability (equilibrium) analysis. \S\ref{sec:uir_ranking} presents a framework of devising diversity-based ranking functions that lead to stability. We then conclude and discuss potential future directions in \S\ref{sec:conclusions}. Appendix~\ref{app:proofs} includes the proofs omitted from the main paper. An extended explanation of Example \ref{exa:xmmr no pne} is provided in Appendix~\ref{app:example}. The source code of the numerical analysis we present is available at \url{https://github.com/ireinman/Stability-in-Competitive-Search-with-Results-Diversification}.

\section{Related Work}\label{sec:related work}
Our focus is on competitive search. We note that there is also a line of work on competitive recommendation. \citet{ben2018game} analyzed the competition among content creators, and proposed the Shapley mediator as a recommendation mechanism that provides both stability and fairness guarantees. A sequence of follow-up analyses of content creator competitions under differing assumptions about the incentive structures of creators~\cite{hron2022modeling,jagadeesan2206supply,yao2023howbad} has been conducted, as well as the proposal of novel techniques for various welfare objectives optimization~\cite{mladenov2020optimizing,yao2024rethinking,yao2024userwelfare}.
 
In the competitive search realm, \citet{basat2017game} showed, as noted above, that the PRP is sub-optimal using game theoretic analysis. They defined games of complete information as we do here. They treated documents as discrete uniform distributions over two topics. In contrast, we assume that documents are represented within a continuous embedding space, where discrete or continuous distributions exist over multiple aspects. \citet{basat2017game} assumed a reward only to the most highly ranked document while we assume a reward decreasing in rank. Furthermore, in contrast to \citet{basat2017game}, we analyze search-results diversification methods.

\citet{ben2019convergence} and \citet{madmon2023search,madmon2024convergence} studied the convergence of learning dynamics of strategic publishers to equilibrium. Follow-up works explored publishers' games involving multiple queries~\cite{nachimovsky2024ranking} and proposed novel corpus-enrichment approaches to guarantee ecosystem stability~\cite{nachimovsky2025power}. A recent line of work explored additional economic aspects of the search ecosystem, such as the impact of data sharing~\cite{keinan2025strategic,taitler2025data} and AI overview summarization~\cite{wu2026ai}. In contrast to our work, none of this previous work addressed a search setting with search-results diversification.

The work most related to ours is that of \citet{mordo2025ameliorating} who were the only ones --- to the best of our knowledge --- to use game theory to analyze a competitive search setting where search-results diversification is applied. Our work is different then theirs in several major aspects. \citet{mordo2025ameliorating} assumed a reward to only the two highest-ranked documents, while we account for all documents in a ranked list. Furthermore, they only analyzed a novelty-based diversification method \cite{carbonell1998use}, while we also analyze a representative aspect-coverage diversification approach: xQuAD~\cite{santos2010explicit}. \citet{mordo2025ameliorating} assumed repeated games with incomplete information and analyzed min-max regret equilibria. In contrast, we analyze games with complete information and use the Nash equilibrium concept solution which allows us to focus on the stability or lack-thereof of the corpus and its tradeoff with corpus diversity. Furthermore, we present a novel approach to define search-results diversification methods that are guaranteed to lead to corpus stability (i.e., equilibrium) while \citet{mordo2025ameliorating} only analyzed a single existing diversification method.

As is standard practice in work on applying game theoretic analysis to competitive search \cite{compSearch}, we focus on a setting where a publisher competes for a single query. \citet{nachimovsky2024ranking} analyzed a setting where a publisher competes for a few queries representing the same information need. The finding was in line with reports of competing for a single query \cite{raifer2017information}: publishers mimic content in documents that were previously highly ranked. In contrast to our work, \citet{nachimovsky2024ranking} did not analyze a setting with search-results diversification. We leave the analysis of competitions for multiple queries with diversity-based ranking for future work.

\section{Diversity-Based Ranking Model}\label{sec:ir setting}

We begin by describing a formal framework for diversity-based ranking which will serve as a basis for the analysis to follow. The framework addresses the task of ranking publishers' documents in a corpus in response to a query with the goal of accounting for both relevance and diversification of search-results. Unless otherwise specified, we assume a fixed corpus of documents and a query. Table~\ref{tab:notations} summarizes all the notations in our paper.

\begin{table*}[!t]
\caption{\label{tab:notations} Notational conventions.}
\centering
\resizebox{\textwidth}{!}{
\begin{tabular}{|>{\largemath}l|p{5.5cm}|>{\largemath}l|p{5.5cm}|}
    \hline
    \textbf{Notation} & \textbf{Description} & \textbf{Notation} & \textbf{Description} \\
    \hline
    $\publishersSet$ & Set of documents: $\{1, 2, \ldots, \numPublishers\}$ & $\rankLocation$ & The index of the document at position $\rankIndex$ \\
    \hline
    $\publishersPer$ & Set of all permutations of $\publishersSet$ & $\simFunction$ & Similarity function on embedding space \\
    \hline
    $\embeddingSpaceBase$ & Embedding space: $[0,1]^{\dimE}$ & $\simPublisher$ & Similarity between $\strategy$ and $\aspectRep$ \\
    \hline
    $\strategy$ & The embedding of document $\publisherIndex$ & $\simAllPublishersRank$ & Similarity scores ordered by $\rankFunction$ \\
    \hline
    $\strategyProfile$ & The tuple of the documents' embeddings & $\scoreFunc$ & Score function \\
    \hline
    $\aspect$ & A query aspect & $\partialRank$ & Documents previously ranked \\
    \hline
    $\aspectRep$ & The embedding of query aspect $\aspect$ & $\lambda$ & Linear weight coefficient \\
    \hline
    $\aspectDistribution$ & Aspect distribution & $\clickModel$ & User utility function \\
    \hline
    $\aspectDistributionSupp$ & Aspect distribution's support & $\publisherUtilityFunction$ & Publisher's utility function \\
    \hline
    $\rankFunction$ & Ranking function & $\publisherIndex$ & Document/publisher index \\
    \hline
    $\publisherRankLocation$ & Rank position of document $\publisherIndex$ & $\rankIndex$ & Position/location index \\
    \hline
\end{tabular}
}
\end{table*}

\paragraph{Query Aspects}
As is standard in work on search-results diversification, we assume that the query manifests a few aspects~\cite{santos2015search}; $\aspect$ denotes an aspect. We do not subscribe to a specific definition of an aspect. It is convenient, however, to think of an aspect as a facet of the underlying information need as is standard practice~\cite{santos2015search}; e.g., for the query ``iphone'', it could be ``price'', ``color'', etc.

We assume that an aspect distribution $\aspectDistribution$ was induced from the query using \emph{some} approach; e.g., based on clustering or topic modeling~\cite{santos2015search}. The distribution can be discrete or continuous. In the spirit of the probability ranking principle (PRP)~\cite{robertson1977probability}, we assume that this is the distribution that bests describes the underlying aspects of the query (information need) given the information available to the search system\footnote{In what follows, our stochastic treatment of aspects is in line with the PRP's stochastic treatment of relevance~\cite{robertson1977probability}. That is, while a document is either relevant to a query or not the PRP assumes a relevance probability which conceptually (and implicitly) amounts to the probability that the ``average user'' will deem the document relevant. The aspect distribution we refer to here can be thought of as modeling the ``belief'' of the average user about the actual query aspects.  Using a specific aspect-distribution induction method results in an estimate for the distribution the same way that a specific relevance estimation method in the PRP results in an estimate for relevance.}.

\paragraph{Documents and embedding representations}
We assume a dense retrieval model~\cite{izacard2021unsupervised,wang2022text,bruch2024foundations}, where both documents and the query aspects are represented in a $\dimE$-dimensional embedding space $\embeddingSpaceBase = [0,1]^\dimE$. 
In what follows, we refer to a document and its embedding representation interchangeably\footnote{Following prior work~\cite{goren2018ranking,tennenholtz2024demystifying,madmon2023search,madmon2024convergence}, we assume that every possible representation in the embedding space corresponds to a valid document, that is, the embedding function is surjective.}.  We use $\publishersSet \coloneqq \{1, 2, \ldots, \numPublishers \}$ to denote a finite set of document indices, where $\strategy \in \embeddingSpaceBase$ is the embedding representation of document $\publisherIndex \in \publishersSet$.  We denote the set of all permutations of $\publishersSet$ by $\publishersPer$.  We denote by $\strategyProfile \in \embeddingSpaceBase^\numPublishers$ the tuple consisting of all document representations, and we write $\strategyProfile_{-\publisherIndex}$ to denote the tuple consisting of all entries of $\strategyProfile$ except for the $\publisherIndex$'s entry. Query aspects are also represented in the same embedding space, with $\aspectRep \in \embeddingSpaceBase$ denoting the embedding representation of aspect $\aspect$.

\paragraph{Ranking functions}
A ranking function, denoted $\rankFunction$, takes as an input a set of document embeddings $\cb{\strategy}_{\publisherIndex \in \publishersSet}$ and an induced aspect distribution $\aspectDistribution$, and returns a permutation (ranked list) of the documents. Following standard practice in work on dense retrieval~\cite{izacard2021unsupervised,wang2022text,bruch2024foundations}, we assume that the ranking function utilizes a \emph{similarity function} $\simFunction: \embeddingSpaceBase \times \embeddingSpaceBase \to [0,1]$ that estimates the level of relevance of a document $\strategy$ to a representation $\aspectRep$.  Throughout the analysis presented below, we use the following similarity function, which is rank-equivalent to the negative Euclidean distance:
\begin{equation*}
    \simFunction(\strategy, \aspectRep) = 1 - \frac{1}{\dimE}\norm{\strategy-\aspectRep}^2.
\end{equation*}
For any document-aspect pair, $\simPublisher = \simFunction(\strategy, \aspectRep)$ is an estimate for the relevance to $\aspect$.\footnote{Another common similarity function is the cosine similarity. If we assume that the vectors in the embedding space are normalized, then the cosine similarity function is rank-equivalent to the negative Euclidean norm.}
We denote the position of document $\publisherIndex$ by $\publisherRankLocation$ and the original index of the document that is ranked at position $\rankIndex$ by $\rankLocation$. In addition, we use $\simAllPublishersRank = {(\simLocation)}_{\rankIndex \in \bb{\numPublishers}}$ to denote the list of \emph{similarity values} of all the documents in the corpus to aspect $\aspect$ when they are ordered by the ranking.

\subsection{Diversity}\label{sub:rank diversity examples}
We analyze diversity-based ranking functions which operate in an iterative manner:  at each position, every unranked document $i$ is assigned a retrieval score, $$\scoreFunc(\strategy, \aspectDistribution, \partialRank),$$ where $\strategy$ is the representation of document $\publisherIndex$, $\aspectDistribution$ is the distribution over query aspects, and $\partialRank$ is the list of documents already ranked. The document assigned the highest score is selected.
The recursive procedure described in Algorithm~\ref{alg:score_func} is then used to induce a ranking. We use $a \mathbin{\|} b$ to denote the concatenation of element $b$ to the tuple $a$.

\begin{algorithm}[ht]
\KwIn{Documents $\cb{\strategy}_{\publisherIndex \in \publishersSet}$, aspect distribution $\aspectDistribution$}
$\partialRankSpecific{0} \gets \varnothing$\\
\For{$\rankIndex=1$ \KwTo $\numPublishers$}{
        $\rankLocation \gets \arg\max_{\publisherIndex \notin \partialRankSpecific{\rankIndex-1}} \scoreFunc(\strategy,\aspectDistribution,\partialRankSpecific{\rankIndex-1})$\\
        $\partialRankSpecific{\rankIndex} \gets \partialRankSpecific{\rankIndex-1} \mathbin{\|} \rankLocation$
    }
\KwOut{A permutation $(\rankLocation)_{\rankIndex\in [\numPublishers]} \in \publishersPer$}
\caption{Ranking procedure using the retrieval score function $\scoreFunc$.}
\label{alg:score_func}
\end{algorithm}

We note that for the ranking procedure to be well-defined, a tie-breaking rule should be specified, as there might be retrieval-scores ties. In the context of deterministic functions, we can, without loss of generality, assume a lexicographical tie-breaking, which will be assumed throughout the paper. Lexicographical tie-breaking can be seen as capturing external factors used by the ranker, such as PageRank scores.

In our theoretical analysis, we consider two retrieval-score functions, and the ranking procedures which use them in Algorithm \ref{alg:score_func}: xQuAD~\cite{santos2010explicit}, and xMMR~\cite{santos2012role}. xQuAD and xMMR represent two widely used approaches for search-results diversification as we discuss below.

\paragraph{The xQuAD retrieval method}
The xQuAD score function~\cite{santos2010explicit} aims to balance
document relevance and \emph{aspect-coverage}, controlled by a
hyperparameter $\lambda \in (0,1)$. The relevance term is the expected similarity with the query aspects, while the aspect-coverage term is a weighted average of the similarity scores, where weights are determined by the aspect distribution and the similarity scores of the previously ranked documents.  Intuitively, remaining documents that are relevant to aspects not yet covered by higher-ranked documents receive higher scores.

\begin{definition}[xQuAD]\label{def:xquad score}
    The xQuAD score of document $\publisherIndex$ is:
    \begin{equation*}
        \scoreFunc_{xQuAD}(\strategy, \aspectDistribution, \partialRank) = \lambda\aspectEShort{\simPublisher} + (1 - \lambda)\aspectEShort{\simPublisher \cdot \underset{\strategyTag \in \partialRank}{\Pi} \big(1 - \simPublisherTag\big)}
    .\end{equation*}
\end{definition}

Note that the second term (aspect-coverage) down-weights aspects that have already been well covered by previously ranked documents, thereby promoting diversity.

\paragraph{The xMMR retrieval method}
The xMMR score function~\cite{santos2012role}, which is based on the MMR framework~\cite{carbonell1998use}, encourages diversity by penalizing redundancy rather than explicitly incentivizing aspect-coverage as in xQuAD. To measure redundancy between two documents, each document is represented as a vector of similarity scores with respect to all aspects. Redundancy between two documents is calculated using a similarity metric. The redundancy of a candidate document with respect to a set of already-ranked documents is obtained by taking a maximum aggregation over these pairwise redundancy terms. 
The final xMMR score balances relevance and a redundancy penalty via a trade-off parameter $\lambda \in (0,1)$. 

\begin{definition}[xMMR]\label{def:xmmr score}
    The xMMR score of document $\publisherIndex$ is:
    \begin{align*}
        \scoreFunc_{xMMR}(\strategy, \aspectDistribution, \partialRank)
        & = \lambda \aspectEShort{\simPublisher} - (1 - \lambda) \maxOn{\strategyTag \in \partialRank}{\aspectEShort{1 - (\simPublisher - \simPublisherTag)^2}}
        \\ & \overset{\text{Rank}}{\equiv} 
        \lambda \aspectEShort{\simPublisher} + 
        (1 - \lambda) \minOn{\strategyTag \in \partialRank}{\aspectEShort{(\simPublisher - \simPublisherTag)^2}},
    \end{align*}
    where $\lambda \in (0,1)$ is hyperparameter, and we assume $\max \emptyset = 0$ and $\min \emptyset = 1$.
\end{definition} 

Throughout the paper, we will use the latter form, in which the redundancy penalty component is replaced with a positive \emph{novelty} component.

\subsection{User Utility Functions}\label{sub:user_utility}
In this work, we adopt position-based user utility functions to capture the gain of users based on the ranked search-results. In a position-based model, the user utility resulting from each retrieved document depends on the document position, the document's estimated relevance to the query aspect, and the estimated aspect-based relevance of the other documents. For simplicity, we focus on user utility functions which are bounded in $[0,1]$. Formally, the \emph{user utility} is a function

\begin{equation*}
    \clickModel: \publishersSet \times {[0, 1]}^\numPublishers \rightarrow [0,1],
\end{equation*}

where $\clickModel(\rankIndex, \simAllPublishersRank)$ represents the utility that a user is gaining from a document ranked at position $\rankIndex$, given that the user's intent corresponds to aspect $\aspect$, and the ordered similarity scores are $\simAllPublishersRank$. An important property of a user utility function is monotonicity:

\begin{definition}[Monotone user utility]
    A user utility function $\clickModel$ is monotone if for every document tuple $\strategyProfile$, ranking $\rb{\publisherRankLocation}_{\publisherIndex \in \publishersSet}$ and pair of documents $\publisherIndex, \publisherIndex'$ such that $\publisherRankLocation > \publisherRankLocationTag$, switching the documents' positions increases the utility gained from document $\publisherIndex$.
\end{definition}

This definition captures the intuitive behavior of search engine users, where visibility and attention are finite resources that diminish as one moves further down a list, which reduces the contribution of lower-ranked documents. Examples of modeling this user behavior model can be seen in the NDCG~\cite{jarvelin2002cumulated} and ERR~\cite{chapelle2009expected} evaluation measures.

In our analysis, we focus on two user utility functions which represent
two approaches in search-results diversification: an
aspect-coverage-based user utility function and a novelty-based user
utility function.\footnote{Note that when examining different options for the user utility function, it is not possible to choose functions in which there are separate components such that one or more of them is independent of the ranking, as this will lead to unrealistic phenomena. For example, if we assume that the function is of the following structure: $\simPublisher + (\text{second-expression})$, then we can see that when the number of publishers $\numPublishers$ tends to infinity, it is possible to reach $\sumLocations\clickModel(\rankIndex, \simAllPublishersRank) \geq \sumLocations \simLocation$ that is not bounded, which is an unrealistic phenomenon - the utility of a user from a ranked list is clearly bounded from above.}

\begin{equation*}
    \clickModel^{coverage}(\rankIndex, \simAllPublishersRank) = \simLocation \cdot \underset{\rankIndex' < \rankIndex}{\Pi} \big(1 - \simLocationTag\big).
\end{equation*}

\begin{equation*}
    \clickModel^{novelty}(\rankIndex, \simAllPublishersRank) = \simLocation \cdot \minOn{\rankIndex' < \rankIndex}{{(\simLocation - \simLocationTag)}^2}.
\end{equation*}

\section{Diversity-Based Ranking in Competitive Search}\label{sec:model}

In a competitive search setting~\cite{compSearch}, publishers are ranking incentivized: they often modify their documents in response to induced rankings so as to improve future ranking.  
We next introduce a \emph{game-theoretic} model, in which documents are produced strategically by publishers, who aim to maximize their utility. Formally, a \emph{\modelName} is defined by a tuple $\game = \rb{\publishersSet, \aspectDistribution, \dimE, \rankFunction, \clickModel}$, where $\publishersSet \coloneqq \{1, 2, \ldots, \numPublishers \}$ is a set of publishers (players), $\aspectDistribution$ is an aspect distribution, $\dimE$ is the dimension of $\embeddingSpaceBase$, $\rankFunction$ is a ranking function, and $\clickModel$ is a user utility function. In this game, each publisher's strategy space $\embeddingSpace$ corresponds to the embedding space, namely $\embeddingSpace = \embeddingSpaceBase, \; \forall \publisherIndex \in \publishersSet$. From now on, when we use the term publisher $\publisherIndex$, we refer to the publisher who publishes document $\publisherIndex$.
As in past work on competitive search \cite{basat2017game}, we assume an alignment between publisher utility and user utility. More specifically, given a strategy profile $\strategyProfile$ (i.e., a vector specifying each publisher's strategy), the utility of publisher $\publisherIndex$ is the expected user utility, where the expectation is taken with respect to the aspect distribution:

\begin{equation*}
    \publisherUtility(\strategyProfile) = \aspectE{\clickModel(\publisherRankLocation, \simAllPublishersRank)}.
\end{equation*}

It is important to point out that the choice of the ranking function $\rankFunction$ significantly affects the publishers' strategic behavior, as it directly shapes their incentive structure.

The games we consider are of \emph{complete information}. Specifically, publishers are assumed to know the embedding approach used by the ranker, the similarity function used to compare aspects and queries with documents, and the method used to induce query aspects; i.e., the ranking function. This assumption is conceptually similar to that taken in work on white box adversarial attacks on ranking functions~\cite{Samadi+al:21a}. We note that previous work on competitive search~\cite{basat2017game} also analyzed publishers' games of complete information, but made a stronger assumption than that we take here; namely, that the true relevance status of a document is known to all players (publishers).

Assuming that publishers do not know the ranking function leads to \emph{Bayesian games} with incomplete information whose treatment is outside the scope of this paper. We hasten to point out, however, that our goal is analyzing (diversity-based) ranking approaches, and more specifically, their effect on publishers' strategies and the stability of the resultant corpus as we discuss next. Hence, neutralizing the effect of the quality of estimates/beliefs used by publishers with respect to the ranking approach allows for a rigorous analysis and understanding of the actual effect of the ranking functions, and more specifically, their underlying principles, on the strategic behavior of publishers and the resultant effects on the corpus. 

In our game theoretic framework, we evaluate ranking functions in terms of stability and diversity. The following sections discusses these concepts in detail.

\subsection{Stability of the Retrieval Setting}

An important notion in game theory is the \emph{Nash equilibrium}, which represents a stable state, namely, a strategy profile from which no player has an incentive to modify its strategy, given that all other players' strategies remain fixed\footnote{Following prior work on   game-theoretic modeling of information retrieval, we adopt the notion of Nash equilibrium in \emph{pure} strategies, and do not consider \emph{mixed strategies} (i.e., allowing for a non-deterministic choice of strategies). See \S7 of~\citet{madmon2023search} for an in-depth discussion on this modeling choice.}. Formally:

\begin{definition}[Nash equilibrium] \label{def:pne}
    A strategy profile $\strategyProfile \in \prod_{\publisherIndex=1}^{\numPublishers} \embeddingSpace$ is a Nash equilibrium (NE) if no player has a profitable unilateral deviation, i.e.,
    \[
        \publisherUtility(\strategyProfile) \geq \publisherUtility(\strategy', \strategiesOthers)
        \quad \text{for all } \publisherIndex \in \publishersSet \text{ and all } \strategy' \in \embeddingSpace.
    \]
\end{definition}

In cases where the system reaches equilibrium, we can examine whether the equilibrium is \emph{diverse}; i.e., if there is variation among the documents. As discussed before, \citet{mordo2025ameliorating} showed that the "mimicking-the-winner" strategy presented in~\citet{raifer2017information} can be ameliorated by using diversity-based ranking methods. Consequently, if a diversity-based retrieval method fails to prevent (significantly) reduced diversity in an equilibrium, its suitability for competitive setting should be called into question.

The strongest notion of stability arises when each player has a single strategy that is optimal \emph{regardless} of the other players' strategies. This idea is captured by the notion of \emph{dominant strategies}:

\begin{definition}[Dominant strategy]\label{def:dominant strategy}
    A strategy $\strategy \in \embeddingSpace$ is a dominant strategy if for any $ \strategy' \in \embeddingSpace$, and for any $\strategiesOthers \in \prod_{\publisherIndex' \neq \publisherIndex} \embeddingSpaceSpecific{\publisherIndex'}$, it holds that $\publisherUtility(\strategy, \strategiesOthers) > \publisherUtility(\strategy', \strategiesOthers)$.
\end{definition}

Clearly, any player in any game can only have at most one dominant strategy. Arguably, a game in which each player has a dominant strategy is degenerate, in the sense that rational players will always play their dominant strategies. It is straightforward to prove the following well-known proposition:

\begin{proposition}\label{pro:dominant implies pne}
    If every player has a dominant strategy, the resulting strategy profile is a Nash equilibrium, and is the unique Nash equilibrium of the game.
\end{proposition}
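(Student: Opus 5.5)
Let $\strategy^*$ denote the dominant strategy of player $\publisherIndex$ (unique by the remark after Definition~\ref{def:dominant strategy}), and let $\strategyProfile^* = (\strategy^*)_{\publisherIndex \in \publishersSet}$. The proof has two parts: first show that $\strategyProfile^*$ satisfies Definition~\ref{def:pne}, then show that every Nash equilibrium coincides with $\strategyProfile^*$. Both parts come directly from the definitions. The only point needing care is that Definition~\ref{def:dominant strategy} is stated with a \emph{strict} inequality, and this is exactly what gives uniqueness.

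\emph{Existence.} Fix a player $\publisherIndex$ and any deviation $\strategy' \in \embeddingSpace$. Apply the dominance inequality with the particular opponent profile $\strategiesOthers = \strategyProfile^*_{-\publisherIndex}$. This gives $\publisherUtility(\strategy^*, \strategyProfile^*_{-\publisherIndex}) \ge \publisherUtility(\strategy', \strategyProfile^*_{-\publisherIndex})$, and strictly so when $\strategy' \neq \strategy^*$. This holds for every $\publisherIndex$ and every $\strategy'$, so it is exactly the condition in Definition~\ref{def:pne}, and $\strategyProfile^*$ is a Nash equilibrium.

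\emph{Uniqueness.} Let $\strategyProfile$ be any Nash equilibrium, and suppose for contradiction that $\strategy \neq \strategy^*$ for some $\publisherIndex$. Apply dominance with opponents $\strategiesOthers$ (whatever they are in $\strategyProfile$) and alternative strategy $\strategy$. This gives $\publisherUtility(\strategy^*, \strategiesOthers) > \publisherUtility(\strategy, \strategiesOthers)$, so deviating to $\strategy^*$ is strictly profitable for player $\publisherIndex$. That contradicts $\strategyProfile$ being a Nash equilibrium. Hence $\strategy = \strategy^*$ for all $\publisherIndex$, i.e.\ $\strategyProfile = \strategyProfile^*$.

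The uniqueness step is the only place where something could go wrong. If dominance were only weak ($\ge$), uniqueness would fail. We must therefore read Definition~\ref{def:dominant strategy} as requiring strictness for all $\strategy' \neq \strategy$; taken literally with $\strategy' = \strategy$ the strict inequality would be vacuous-false. Beyond this interpretive point, no other obstacle is expected.
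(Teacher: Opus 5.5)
Your proof is correct. It is the standard argument the paper has in mind; the paper omits the proof as straightforward, and your existence and uniqueness steps fill it in exactly. You also correctly read the strict inequality in the dominance definition as ranging over $x_i' \neq x_i$, which is precisely what makes the uniqueness step go through.
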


\subsection{Games Induced by Ranking Functions}\label{sub:games examples}

We now instantiate the \modelPlural induced by each of the two ranking functions discussed in \S\ref{sec:ir setting}. (Recall that, unless stated otherwise, we adopt the lexicographical tie-breaking rule.) To instantiate the induced game for each ranking function, we utilize the user utility function representing the same approach, as presented in \S\ref{sub:user_utility}. The publisher utilities in each of the two families of induced games can be compactly written as:

\begin{equation*}
    \publisherUtility^{xQuAD}(\strategyProfile)
    = \aspectEShort{\clickModel_{coverage}(\publisherRankLocation, \simAllPublishersRank)}
    = \aspectEShort{\simPublisher \cdot \underset{\rankIndex' < \publisherRankLocation}{\Pi} \big(1 - \simLocationTag\big)},
\end{equation*}
\begin{equation*}
    \publisherUtility^{xMMR}(\strategyProfile)
    = \aspectEShort{\clickModel_{novelty}(\publisherRankLocation, \simAllPublishersRank)}
    = \aspectEShort{\simPublisher \cdot \minOn{\rankIndex' < \publisherRankLocation}{(\simPublisher - \simLocationTag)^2}}.
\end{equation*}

The remainder of the paper is concerned with the analysis of the \modelPlural induced by the various ranking functions in terms of stability and diversity. From now on, unless explicitly stated, when we use the term \emph{utility}, we refer to the publisher utility function.

\begin{table*}[!t]
\caption{\label{tab:summary} Summary of the theoretical results presented in Sections~\ref{sec:main results lex} and \ref{sec:uir_ranking}.}
\centering
\resizebox{\textwidth}{!}{
\begin{tabular}{|l|l|c|c|c|l|}
    \hline
    \textbf{Ranking Function} 
    & \textbf{Aspect Distribution} 
    & \textbf{\# Publishers} 
    & \textbf{Equilibrium Exists?} 
    & \textbf{Diverse Equilibrium?} 
    & \textbf{Formal Result} \\
    \hline
    \multirow{2}{*}{xQuAD}
    & Symmetric 
    & Any 
    & Yes (unique) 
    & No 
    & Theorem~\ref{the:xquad lex dominant} \\
    \cline{2-6}
    & Asymmetric 
    & $\ge 3$ 
    & Not guaranteed 
    & -- 
    & Observation~\ref{obs:xquad non-sym no pne} \\
    \hline
    \multirow{2}{*}{xMMR}
    & Any
    & $2$ 
    & Yes 
    & Yes 
    & Corollary~\ref{cor:xmmr trivial pne} \\
    \cline{2-6}
    & Any
    & $\ge 3$ 
    & Not guaranteed 
    & Yes (if exists)
    & Observation~\ref{obs:xmmr copycat zero} and Example~\ref{exa:xmmr no pne} \\
    \hline
    Utility-induced 
    & Any 
    & Any 
    & Yes 
    & Not guaranteed
    & Theorem~\ref{the:direct pne} \\
    \hline
    \multirow{2}{*}{UIR--xQuAD}
    & Symmetric
    & Any 
    & Yes (unique) 
    & No 
    & \multirow{2}{*}{Lemma~\ref{lem:xquad uir unique pne}} \\
    \cline{2-5}
    & Asymmetric (two aspects)
    & Any 
    & Yes (unique) 
    & Yes 
    & \\
    \hline
    UIR--xMMR 
    & Any 
    & Any 
    & Yes 
    & Yes 
    & Lemma~\ref{lem:xmmr uir pne} \\
    \hline
\end{tabular}
}
\end{table*}

\section{Stability and Diversity Analysis}\label{sec:main results lex}
We next present a game-theoretic analysis of our competitive search setting under diversity-based ranking functions. In particular, we study the \modelPlural induced by the xQuAD, and the xMMR ranking functions in terms of stability and diversity. Table~\ref{tab:summary} summarizes the theoretical results presented in this section and in \S\ref{sec:uir_ranking}.
A strategy that will play a central role in our analysis is the \emph{\relSt}, defined as the strategy that matches the expected aspect representation (with respect to a given aspect distribution).

\begin{definition}[Relevance strategy]\label{def:relevance strategy}
    The \relSt is $\aspectE{\aspectRep}$, and the \relPro is a strategy profile in which all publishers play the \relSt.
\end{definition}

Note that the \relSt uniquely maximizes the expected similarity score $\aspectEShort{\simPublisher}$, hence it serves as a natural benchmark for studying publishers' strategic behavior.

\begin{proposition}\label{pro:sim maximizer}
    $\aspectEShort{\aspectRep}$ is the unique maximizer of $\; \aspectEShort{\simPublisher}$.
\end{proposition}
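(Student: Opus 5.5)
The plan is to reduce the problem to the bias--variance decomposition of the expected squared Euclidean distance. By the definition of the similarity function, for a fixed document $\strategy \in \embeddingSpaceBase$ we have
\begin{equation*}
    \aspectEShort{\simPublisher} = 1 - \frac{1}{\dimE}\aspectEShort{\norm{\strategy - \aspectRep}^2}.
\end{equation*}
Maximizing $\aspectEShort{\simPublisher}$ over $\embeddingSpaceBase$ is therefore equivalent to minimizing $g(\strategy) \coloneqq \aspectEShort{\norm{\strategy - \aspectRep}^2}$. The expectation is finite because $\aspectRep \in [0,1]^\dimE$ is bounded; this holds whether $\aspectDistribution$ is discrete or continuous. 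Write $\mu \coloneqq \aspectEShort{\aspectRep}$.

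Next, I add and subtract $\mu$ inside the norm and expand:
\begin{equation*}
    g(\strategy) = \norm{\strategy - \mu}^2 + 2\langle \strategy - \mu, \mu - \aspectEShort{\aspectRep}\rangle + \aspectEShort{\norm{\mu - \aspectRep}^2}.
\end{equation*}
The cross term vanishes by linearity of expectation, since $\strategy - \mu$ is deterministic and $\aspectEShort{\mu - \aspectRep} = 0$. This leaves
\begin{equation*}
    g(\strategy) = \norm{\strategy - \mu}^2 + \aspectEShort{\norm{\aspectRep - \mu}^2},
\end{equation*}
where the second term does not depend on $\strategy$. Hence $g$ is minimized exactly when $\norm{\strategy - \mu}^2 = 0$, that is, at $\strategy = \mu$, and every other point gives a strictly larger value. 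This yields both the maximization and its uniqueness.

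The only remaining point is feasibility: $\mu$ must lie in the strategy space. Since $\embeddingSpaceBase = [0,1]^\dimE$ is convex and closed and $\aspectRep \in \embeddingSpaceBase$ almost surely, the mean $\mu$ also lies in $\embeddingSpaceBase$. Coordinatewise, each coordinate of $\aspectRep$ lies in $[0,1]$, so its expectation does too. I expect this to be the only step needing a remark, and it is routine.
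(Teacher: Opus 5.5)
Your proof is correct and follows essentially the same route as the paper. The paper simply observes that maximizing the expected similarity is equivalent to minimizing the expected squared error, whose unique minimizer is the distribution mean. Your bias--variance expansion, together with the check that the mean lies in $[0,1]^k$, spells out that one-line argument in full.
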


The proof is simple and relies on the fact that maximizing the expected similarity score is equivalent to minimizing the MSE error, whose unique minimizer is the distribution mean.

In the next subsections, we study the effect of the two diversity-based ranking functions, xQuAD and xMMR, on stability and diversity in the induced competitive environment.

\subsection{Stability and Diversity under xQuAD}\label{sub:xquad}

We now turn to analyze xQuAD \modelPlural, in which the ranking function aims to promote diversity by explicitly incentivizing aspect coverage. Surprisingly, we show that for a broad class of aspect distributions, xQuAD ranking induces a degenerate game, in which the \relSt is a dominant strategy for all publishers. What this means is that if publishers act rationally, they will publish exactly the same document. This result is rather surprising, as in xQuAD games both the score function and the utility function supposedly prioritize coverage.

We begin by defining the notion of \emph{symmetric aspect distributions}. We say that $\aspectDistribution$ is symmetric if, for any distance $\aspectD > 0$, the expected value of $\aspectRep$ conditioned on being at that distance from the mean equals the mean itself. Formally:

\begin{definition}[Symmetric aspect distribution]\label{def:sym dist}
    An aspect distribution $\aspectDistribution$ is symmetric if for every $\aspectD \geq 0$, 
    \[
    \aspectEShort{\aspectRep \mid \norm{\aspectEShort{\aspectRep} - \aspectRep} = \aspectD } = \aspectEShort{\aspectRep}.
    \]
\end{definition}

The class of symmetric distributions contains, for instance, spherical distributions, quadratic uniform distributions, and truncated multivariate normal distributions provided that the truncation is symmetric around the mean.

We now show that in xQuAD games with symmetric aspect distribution, if all publishers that precede publisher $\publisherIndex$ in the ranking follow the \relSt, the unique maximizer of the user utility from the document at location $\publisherIndex$ is also the \relSt as well.

\begin{lemma}\label{lem:xquad recursive maximizer}
    Let $\aspectDistribution$ be a symmetric aspect distribution, and let $\publishersSet' \subseteq \publishersSet$ be a subset of publishers. If $\; \forall \publisherIndex' \in \publishersSet',$ $\strategyTag = \aspectEShort{\aspectRep}$, then
    \begin{equation*}
        \aspectEShort{\aspectRep} = \argmaxOn{\strategy}{\aspectEShort{\simPublisher \underset{\publisherIndex' \in \publishersSet'}{\Pi} \big(1 - \simPublisherTag\big)}}
    \end{equation*}
\end{lemma}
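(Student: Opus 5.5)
Write $\mu = \aspectEShort{\aspectRep}$ and $m = \abs{\publishersSet'}$. Because every publisher in $\publishersSet'$ plays $\mu$, each factor $1 - \simPublisherTag$ equals $\frac{1}{\dimE}\norm{\mu - \aspectRep}^2$. This quantity depends on the aspect only through its distance $\aspectD = \norm{\mu - \aspectRep}$. The product over $\publishersSet'$ is therefore a weight $g(\aspectD) = \rb{\aspectD^2/\dimE}^{m}$, which is nonnegative and depends only on $\aspectD$. When $\publishersSet' = \emptyset$ the weight is $g \equiv 1$, and the claim reduces to Proposition~\ref{pro:sim maximizer}. The objective becomes
\[
\aspectEShort{\simPublisher \, g(\aspectD)} = \aspectEShort{g(\aspectD)} - \frac{1}{\dimE}\aspectEShort{g(\aspectD)\norm{\strategy - \aspectRep}^2}.
\]
So we need to minimize a weighted mean squared error $\aspectEShort{g(\aspectD)\norm{\strategy - \aspectRep}^2}$ over $\strategy \in \embeddingSpaceBase$.

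I would handle this weighted MSE by conditioning on $\aspectD$ via the tower property. For fixed $\aspectD$, expand
\[
\norm{\strategy - \aspectRep}^2 = \norm{\strategy - \mu}^2 + 2\langle \strategy - \mu, \mu - \aspectRep\rangle + \norm{\mu - \aspectRep}^2 .
\]
Symmetry (Definition~\ref{def:sym dist}) gives $\aspectEShort{\aspectRep \mid \aspectD} = \mu$, so the cross term vanishes conditionally. The conditional expectation is then $\norm{\strategy - \mu}^2 + \aspectD^2$. Multiplying by $g(\aspectD)$ and averaging over $\aspectD$ gives
\[
\aspectEShort{g(\aspectD)}\norm{\strategy - \mu}^2 + \aspectEShort{g(\aspectD)\aspectD^2}.
\]
As long as $\aspectEShort{g(\aspectD)} > 0$, this is uniquely minimized at $\strategy = \mu$. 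Since $\mu \in \embeddingSpaceBase$ by convexity of $[0,1]^\dimE$, $\mu$ is a feasible point, so it is the unique argmax.

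The main obstacle is the degenerate case $\aspectEShort{g(\aspectD)} = 0$ when $m \ge 1$. This happens exactly when $P^\aspect$ is a point mass at $\mu$. In that case the objective is identically zero and the argmax is not unique. I would dispose of it by noting that it is excluded implicitly, or by stating the lemma for non-degenerate distributions. For continuous distributions it is impossible, because $\aspectD > 0$ almost surely.

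A secondary care point is measure-theoretic. Conditioning on the event $\{\norm{\mu - \aspectRep} = \aspectD\}$ should be read as conditioning on the random variable $\aspectD$, which is how Definition~\ref{def:sym dist} should be interpreted for continuous distributions. Integrability is immediate because all quantities are bounded on $[0,1]^\dimE$.
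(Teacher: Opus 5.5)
Your proof is correct and follows essentially the paper's route: both reduce the product to a weight that depends only on $\aspectD=\norm{\mu-\aspectRep}$ (with $\mu=\aspectEShort{\aspectRep}$), and both use symmetry to force the conditional mean on each distance shell to be $\mu$. Your closed form $\aspectEShort{g(\aspectD)}\norm{\strategy-\mu}^2+\aspectEShort{g(\aspectD)\aspectD^2}$ replaces the paper's shell-by-shell appeal to Proposition~\ref{pro:sim maximizer} and makes the passage to a unique global maximizer explicit.

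Your caveat about the point mass at $\mu$ is also right, and it identifies a gap the paper's argument passes over. For $\publishersSet'\neq\emptyset$ the paper drops the factor $(\aspectD^2/\dimE)^{\abs{\publishersSet'}}$ from the conditional argmax, which is invalid on the shell $\aspectD=0$. So for that degenerate, trivially symmetric distribution the objective is identically zero and uniqueness fails, and so does the strict inequality invoked in Theorem~\ref{the:xquad lex dominant}.
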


The proof relies on the fact that for symmetric aspect distributions, all aspects with the same distance from the expectation are being uniformly reweighted in the aspect-coverage term, preserving the optimality of the \relSt. A full proof of the lemma can be found in Appendix~\ref{proof lem:xquad recursive maximizer}. Using this lemma, we will show that playing the \relSt is an optimal choice in terms of ranking, and that if a publisher plays a strategy other than the \relSt, their ranking will necessarily be worse than the ranking of a publisher who did follow this strategy.

\begin{lemma}\label{lem:xquad lex relevance better rank}
     Let $\game$ be an xQuAD game with a symmetric aspect distribution, and let $\publisherIndex, \publisherIndex'$ be a pair of publishers. If $\strategy = \aspectEShort{\aspectRep}$ and $\strategyTag \neq \aspectEShort{\aspectRep}$, then, $\publisherRankLocation < \publisherRankLocationTag$.
\end{lemma}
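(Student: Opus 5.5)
We argue by contradiction on the order in which Algorithm~\ref{alg:score_func} selects documents. Suppose $\publisherRankLocationTag < \publisherRankLocation$. Then there is a position $\rankIndex$ at which $\publisherIndex'$ is selected while $\publisherIndex$ is still unranked. Let $\partialRank = \partialRankSpecific{\rankIndex-1}$ be the list of documents ranked before that step. It contains neither $\publisherIndex$ nor $\publisherIndex'$. By the selection rule,
\[
\scoreFunc_{xQuAD}(\strategyTag,\aspectDistribution,\partialRank) \geq \scoreFunc_{xQuAD}(\strategy,\aspectDistribution,\partialRank).
\]
This is the inequality we aim to contradict by showing $\scoreFunc_{xQuAD}(\strategy,\aspectDistribution,\partialRank) > \scoreFunc_{xQuAD}(\strategyTag,\aspectDistribution,\partialRank)$. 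Strictness is what we need, because equality together with lexicographic tie-breaking could favor $\publisherIndex'$.

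The plan is an induction on the step index $\rankIndex$. Choose $\rankIndex$ to be the \emph{first} step at which some document not playing the \relSt is selected while some document playing it remains unranked. By minimality, every document in $\partialRank$ plays the \relSt $\aspectEShort{\aspectRep}$. Otherwise an earlier step would already have selected a non-relevance document while $\publisherIndex$, which plays the \relSt and is unranked, was still available.

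With all of $\partialRank$ playing the \relSt, we split the xQuAD score into its two terms.

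\textbf{Relevance term.} By Proposition~\ref{pro:sim maximizer}, $\aspectEShort{\simPublisher}$ is uniquely maximized at $\aspectEShort{\aspectRep}$. Hence $\aspectEShort{\simPublisher} > \aspectEShort{\simPublisherTag}$ strictly.

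\textbf{Coverage term.} Apply Lemma~\ref{lem:xquad recursive maximizer} with $\publishersSet' = \partialRank$. It says that $\aspectEShort{\aspectRep}$ is the (unique) argmax of $\aspectEShort{\simPublisher \prod_{\publisherIndex''\in\partialRank}(1-\simPublisherTagTag)}$. Therefore the coverage term of $\publisherIndex$ is at least that of $\publisherIndex'$.

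Since $\lambda\in(0,1)$, the strict relevance gap combined with the weak coverage inequality gives a strict score gap. This contradicts the selection of $\publisherIndex'$. The argument covers the empty case $\partialRank=\varnothing$ too: there the product is $1$ and both terms reduce to expected similarity.

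The main obstacle is the reduction to the minimal step. The contradiction at step $\rankIndex$ only requires that every document already ranked plays the \relSt, and minimality is what guarantees this. Formally, we show by induction on $\rankIndex$ that as long as some relevance-playing document is unranked, every document selected so far plays the \relSt. The same comparison shows that a relevance-playing candidate strictly beats any non-relevance candidate at each step. Consequently, all relevance-playing documents occupy the top positions, which yields $\publisherRankLocation < \publisherRankLocationTag$.

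A secondary point is the uniqueness in Lemma~\ref{lem:xquad recursive maximizer}, which we use only as a weak inequality. Strictness comes from the relevance term, so we never need strict uniqueness of the coverage maximizer.
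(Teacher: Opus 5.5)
Your proof is correct and follows essentially the same route as the paper: take the highest-ranked (first-selected) document that does not play the relevance strategy, observe that every document ranked before it plays $\aspectEShort{\aspectRep}$, and conclude that publisher $\publisherIndex$, still unranked at that step, has a strictly higher xQuAD score. The only difference is where the strictness comes from. The paper uses uniqueness of the maximizer in both Proposition~\ref{pro:sim maximizer} and Lemma~\ref{lem:xquad recursive maximizer}, which lets it cover $\lambda\in[0,1]$; it later reuses this with $\lambda=0$ for UIR-xQuAD. You instead take strictness from the relevance term, which needs $\lambda>0$ but uses Lemma~\ref{lem:xquad recursive maximizer} only as a weak inequality.
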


The proof of the lemma relies on Proposition~\ref{pro:sim maximizer} and Lemma~\ref{lem:xquad recursive maximizer}. A full proof of the lemma can be found in Appendix~\ref{proof lem:xquad lex relevance better rank}. By definition, the expected user utility is also the utility of a publisher. Having established that the \relSt leads to the best possible position while at the same time it maximizes the user utility from this position (since those who are ranked above necessarily play the \relSt as well), we prove the following theorem:

\begin{theorem}\label{the:xquad lex dominant}
    In any xQuAD \modelName with a symmetric aspect distribution, the \relSt is a dominant strategy for all the publishers.
\end{theorem}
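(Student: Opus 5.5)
Fix a publisher $\publisherIndex$, an arbitrary profile of the others $\strategiesOthers$, and a deviation $\strategy' \neq \aspectEShort{\aspectRep}$. Write $\mu = \aspectEShort{\aspectRep}$. We compare $\publisherUtility(\mu, \strategiesOthers)$ with $\publisherUtility(\strategy', \strategiesOthers)$ and show the former is strictly larger.

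\textbf{Step 1: structure of the ranking.} Let $M \subseteq \publishersSet\setminus\{\publisherIndex\}$ be the set of other publishers playing $\mu$. By Lemma~\ref{lem:xquad lex relevance better rank}, every publisher playing $\mu$ is ranked above every publisher not playing $\mu$. When $\publisherIndex$ plays $\mu$, the top $\abs{M}+1$ positions are therefore occupied by $M\cup\{\publisherIndex\}$, ordered among themselves by lexicographic tie-breaking, since their scores coincide at every step. Hence every document ranked above $\publisherIndex$ plays $\mu$. Let $P\subseteq M$ denote this set of documents. Then
\[
\publisherUtility(\mu,\strategiesOthers)=\aspectEShort{\simFunction(\mu,\aspectRep)\,\Pi_{\publisherIndex'\in P}\big(1-\simFunction(\mu,\aspectRep)\big)}.
\]

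\textbf{Step 2: the deviation.} When $\publisherIndex$ plays $\strategy'$, Lemma~\ref{lem:xquad lex relevance better rank} places all of $M$ above $\publisherIndex$, and possibly some other non-$\mu$ documents as well. Call the set of documents above $\publisherIndex$ in this case $Q$, so that $Q \supseteq M \supseteq P$. Since each factor $1-\simPublisherTag\in[0,1]$, dropping the factors indexed by $Q\setminus P$ can only increase the product. This gives
\[
\publisherUtility(\strategy',\strategiesOthers)\le \aspectEShort{\simFunction(\strategy',\aspectRep)\,\Pi_{\publisherIndex'\in P}\big(1-\simPublisherTag\big)},
\]
where all of $P$ play $\mu$. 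The ordering among the non-$\mu$ players does not affect this bound, because we only use the inclusion $Q\supseteq P$.

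\textbf{Step 3: apply Lemma~\ref{lem:xquad recursive maximizer}.} Take $\publishersSet'=P$. The lemma says $\mu$ is the argmax of the right-hand side, which gives $\publisherUtility(\strategy',\strategiesOthers)\le\publisherUtility(\mu,\strategiesOthers)$.

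\textbf{Main obstacle: strictness.} Definition~\ref{def:dominant strategy} requires a strict inequality. The plan is to read Lemma~\ref{lem:xquad recursive maximizer} as asserting a \emph{unique} maximizer, as its argmax equality with a single point indicates. Uniqueness makes the Step 3 inequality strict for $\strategy'\neq\mu$, because the objective is continuous on compact $[0,1]^\dimE$ with a single maximizer.

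One degenerate case needs checking. If some aspect has $\simFunction(\mu,\aspectRep)=1$ a.s. and $P\neq\varnothing$, then the product is $0$ and both sides could vanish. This happens only when $\aspectDistribution$ is a point mass at $\mu$. In that case $\aspectEShort{\simPublisher}$ and the coverage term are both zero whenever $P\neq\varnothing$. There I would either invoke the lemma's uniqueness as stated, or note that the paper's strict-uniqueness claim in Lemma~\ref{lem:xquad recursive maximizer} already covers it. If that fails, I would settle for weak dominance in this trivial case.

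Everything else is the bookkeeping above: the rank improves weakly, the set of documents above $\publisherIndex$ shrinks to $\mu$-players only, and the objective is pointwise monotone in that set.
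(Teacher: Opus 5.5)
Your proof is correct and follows the paper's argument: Lemma~\ref{lem:xquad lex relevance better rank} shows that everyone above you plays $\mu$ and that deviating cannot improve your rank, non-negativity of the coverage factors lets you drop the extra documents, and Lemma~\ref{lem:xquad recursive maximizer} gives strictness; your $P\subseteq M\subseteq Q$ bookkeeping just makes explicit what the paper's transition~(1) does implicitly by indexing positions. The degenerate case you flag is a genuine exception: if $P^a$ is a point mass at $\mu$ and some lower-indexed publisher plays $\mu$, then $u_i(\mu,x_{-i})$ and $u_i(x',x_{-i})$ both equal $0$ (the utilities vanish, not $\mathbb{E}[s_i^a]$), the uniqueness claim of Lemma~\ref{lem:xquad recursive maximizer} is false there, and the paper's proof has the same hole — so your first two fallbacks do not work, and the statement needs an extra hypothesis such as $|\mathrm{supp}(P^a)|\ge 2$ (as in Lemma~\ref{lem:xquad uir unique pne}) or must be weakened to weak dominance in that case.
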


\begin{proof}
    Let $\game$ be a xQuAD \modelName with a symmetric aspect distribution. Fix some publisher $\publisherIndex \in \publishersSet$. We will show that the \relSt of playing $\aspectEShort{\aspectRep}$ is a dominant strategy. 
    
    Let $\strategiesOthers \in \prod_{\publisherIndex' \neq \publisherIndex} \embeddingSpaceSpecific{\publisherIndex'}$ be some strategies of all players except player $\publisherIndex$, $\strategy^1 = \aspectEShort{\aspectRep}$ and $\strategy^2 \neq \aspectEShort{\aspectRep}$.

    According to Lemma~\ref{lem:xquad lex relevance better rank}, every publisher positioned above publisher $\publisherIndex$ played $\aspectEShort{\aspectRep}$. In addition, according to the lemma, if publisher $\publisherIndex$ will deviate to $\strategy^2$, her position will not improve, as she can not be ranked above a publisher who plays $\aspectEShort{\aspectRep}$. We denote the rank of publisher $\publisherIndex$ when playing $\strategy^1$ by $\publisherRankLocation^1$  and the of publisher $\publisherIndex$ when playing $\strategy^2$ by $\publisherRankLocation^2$ and we get that $\publisherRankLocation^1 \leq \publisherRankLocation^2$.
    Therefore:
    \begin{align*}
        \publisherUtility(\strategy^2, \strategiesOthers) & = \aspectEShort{\simFunction(\strategy^2, \aspectRep) \cdot \underset{\rankIndex' < \publisherRankLocation^2}{\Pi} \big(1 - \simLocationTag\big)}
        \\ & \overset{(1)}{\leq} 
        \aspectEShort{\simFunction(\strategy^2, \aspectRep) \cdot \underset{\rankIndex' < \publisherRankLocation^1}{\Pi} \big(1 - \simLocationTag\big)}
        \\ & \overset{(2)}{<} 
        \aspectEShort{\simFunction(\strategy^1, \aspectRep) \cdot \underset{\rankIndex' < \publisherRankLocation^1}{\Pi} \big(1 - \simLocationTag\big)} = \publisherUtility(\strategy^1, \strategiesOthers)
    ,\end{align*}
    where transition~(1) is based on that $\publisherRankLocation^1 \leq \publisherRankLocation^2$ and that all the components are non-negative, and transition~(2) is Lemma~\ref{lem:xquad recursive maximizer}.

    As a result, by definition, the \relSt of playing $\aspectEShort{\aspectRep}$ is a dominant strategy for every publisher $\publisherIndex$.
\end{proof}

Theorem~\ref{the:xquad lex dominant} implies that under aspect
symmetry, xQuAD-based games are degenerated in the sense that rational
publishers will all choose the same document representation, despite
the explicit diversity objective imposed by the xQuAD ranking
function. It is important to note that equilibrium represents the state-of-affairs in the long run. More specifically, Theorem~\ref{the:xquad lex dominant} implies an {\em herding} effect which was already observed in work on analyzing ranking games where ranking depends solely on relevance estimates (and not diversity) \cite{raifer2017information}.

One possible explanation to the findings discussed above is that xQuAD incentivizes coverage rather than novelty. Intuitively, if the documents ranked first and second are identical, then a reasonable user will not gain any value from the second-ranked document. This consideration is not reflected in the aspect-coverage-based user utility function but is captured by the novelty-based user utility function discussed in \S\ref{sub:xmmr}.\footnote{Note that in the two identical top-ranked documents illustration, in the novelty-based user utility function the term $\minOn{\rankIndex' < \publisherRankLocation}{(\simPublisher - \simLocationTag)^2}$ becomes zero, which heavily punishes mimicking documents of previously ranked publishers.}

We conclude the analysis of the xQuAD-induced games by demonstrating that under a non-symmetric aspect distribution, the previous results break and an equilibrium may not exist. A full proof of the observation can be found in Appendix~\ref{proof obs:xquad non-sym no pne}.

\begin{observation}\label{obs:xquad non-sym no pne}
    Let $\game$ be an xQuAD three-publisher game with $\dimE = 1, \lambda = 0.5$ and the following non-symmetric aspect distribution:
    \begin{equation*}
        \aspectProb = 0.75 \cdot \ind(\aspectRep = 0) + 0.25 \cdot \ind(\aspectRep = 1).
    \end{equation*} 
    Then, $\game$ possesses no Nash equilibria.
\end{observation}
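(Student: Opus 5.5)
The plan is to work directly with the explicit formulas for $\dimE=1$ and run a case analysis on an arbitrary strategy profile, exhibiting a profitable deviation in every case. With $\aspectRep\in\{0,1\}$ we have $\simFunction(x,0)=1-x^2$ and $\simFunction(x,1)=2x-x^2$, so $\aspectEShort{\simPublisher}=0.75+0.5x-x^2$. This is maximized uniquely at the \relSt $0.25$ (Proposition~\ref{pro:sim maximizer}), with value $0.8125$.

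For the first position the two xQuAD terms coincide, so the score is just $\aspectEShort{\simPublisher}$. The top document is therefore the one closest to $0.25$, with ties broken lexicographically, and its utility is $\aspectEShort{\simPublisher}$.

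\textbf{Step 1: the top slot.} I will show that in any equilibrium publisher $1$ plays $0.25$ and is ranked first.
\begin{itemize}
\item If the top document is not at $0.25$, the top publisher stays first by moving to $0.25$ and strictly increases her utility.
\item If the top document is at $0.25$ but belongs to a publisher $\publisherIndex>1$, then publisher $1$ can copy $0.25$ and win the tie. She gains $0.8125$, which exceeds any lower-position utility. Such utilities are bounded by $\aspectEShort{\simPublisher\,(1-\simFunction(0.25,\aspectRep))}\le 0.75\cdot0.0625+0.25\cdot0.5625=0.1875$.
\end{itemize}

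\textbf{Step 2: the second slot.} Given $x_1=0.25$, the residual weights are $w_0=0.75\cdot 0.0625$ and $w_1=0.25\cdot0.5625$. The second-position utility $h(x)=w_0(1-x^2)+w_1(2x-x^2)$ is maximized at $x=w_1/(w_0+w_1)=0.75$. The second-position score is $g(x)=0.5\,\aspectEShort{\simPublisher}+0.5\,h(x)$, a concave quadratic with a unique maximizer $x^g$ near $0.25$ (strictly less than $0.75$, to be computed).

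By a copying argument analogous to Step 1, publisher $2$ must be second. Publisher $2$ wins ties against publisher $3$, so if publisher $3$ were second, publisher $2$ could copy $x_3$, take the second slot, and obtain exactly publisher $3$'s second-place utility. This strictly exceeds her third-place utility, since the third-place utility carries an extra factor $(1-\simFunction(x_3,\aspectRep))$. One caveat: if that factor equals $1$ for both aspects, publisher $3$'s second-place utility is zero. I will handle this by a small perturbation or by noting that publisher $3$ would then deviate.

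Now consider publisher $3$ at some $x_3$. If $g(x_2)<\max_y g(y)=g(x^g)$, publisher $3$ can move to $x^g$ and overtake publisher $2$, which yields positive second-place utility $h(x^g)$. I will show this exceeds every third-place utility, which is small because of the extra product factor. Hence $g(x_2)=g(x^g)$, which forces $x_2=x^g$.

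\textbf{Step 3: the main obstacle.} With $x_1=0.25$ and $x_2=x^g$ fixed, publisher $3$'s best response $x_3^*$ maximizes the third-place utility $\aspectEShort{\simPublisher\,(1-\simFunction(0.25,\aspectRep))(1-\simFunction(x^g,\aspectRep))}$. This is again a weighted mean of $0$ and $1$ and can be computed in closed form. It then remains to show that publisher $2$ has a profitable deviation from $x^g$. Since $g(x_3^*)<g(x^g)$ strictly, there is an interval of points $x$ between $x^g$ and $0.75$ with $g(x)\ge g(x_3^*)$. Moving to such an $x$ keeps publisher $2$ second, by tie-breaking if needed, and strictly increases $h$, because $h$ is increasing on $[x^g,0.75]$.

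The delicate part is the bookkeeping that verifies every claimed strict inequality numerically for $\lambda=0.5$. The key ones are:
\begin{itemize}
\item $h(x^g)$ exceeds every third-place utility;
\item $x^g<0.75$;
\item $g(x_3^*)<g(x^g)$, so that the deviation interval is nonempty.
\end{itemize}
I would carry these out by explicit evaluation of the quadratics. The case in which publisher $3$ is not at her best response is immediate, since she then deviates.
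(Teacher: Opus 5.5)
\paragraph{Verdict} The proposal has a genuine gap in Step 2. Your Steps 1 and 3 are sound. Step 3 is essentially the paper's case in which $x_2$ is the unique maximizer $x^g=25/76$ of the second-slot score $g$. You rightly need only $x_3^*\neq x^g$; in fact $x_3^*\approx 0.926$, not $1$ as the paper writes.

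\paragraph{Where Step 2 fails} Both of its reductions rest on global comparisons that are false in this instance.

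First, the copying argument for ``publisher $2$ must be second'' compares the wrong quantities. Publisher $2$'s third-place utility is $\mathbb{E}[s(x_2,a)(1-s(0.25,a))(1-s(x_3,a))]$. It involves $s(x_2,a)$, so the extra factor $1-s(x_3,a)$ does not place it below $h(x_3)$. At $x=(0.25,1,0)$, publisher $3$ is second because $g(0)=51/128>g(1)=25/128$, and she earns $h(0)=3/64$. Publisher $2$ earns $\frac14\cdot\frac{9}{16}=9/64$ in third place, so copying $x_3$ would hurt her.

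Second, the claim that $h(x^g)$ exceeds every third-place utility is false, because $1-s(x_2,a)$ equals $1$ on aspect $1$ when $x_2=0$. At $x=(0.25,0,1)$, publisher $2$ is second and publisher $3$ earns $9/64\approx0.141$ in third place, while $h(25/76)=11007/92416\approx0.119$. Jumping to $x^g$ is then unprofitable for publisher $3$, so your conclusion $x_2=x^g$ is not established.

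\paragraph{What is missing} These profiles are not equilibria, but a different deviation destroys them. At $(0.25,0,1)$, publisher $2$ copies $x_3=1$ and rises from $3/64$ to $9/64$. The missing idea is to compare $u_2$ with $u_3$ and use local deviations rather than universal bounds; this is essentially the structure of the paper's case analysis.
\begin{itemize}
\item If $u_2<u_3$, publisher $2$ copies $x_3$, wins the tie, and gets $h(x_3)\ge u_3$ whatever the current order.
\item If $u_2>u_3$ and $x_2\neq x^g$, publisher $3$ moves slightly away from $x_2$ in the direction in which $g$ increases. She then ranks second with utility arbitrarily close to $h(x_2)\ge u_2>u_3$.
\item If $u_2=u_3$ and publisher $2$ is second, copying $x_3$ yields $h(x_3)>u_3$. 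This holds because $\sum_a w_a\,s(x_2,a)\,s(x_3,a)$ can vanish only at $(x_2,x_3)=(0,1)$, where $u_2\neq u_3$.
\item If $u_2=u_3$ and publisher $3$ is second, she has a strict score lead and can slide toward $0.75$, the maximizer of $h$. The position $x_3=0.75$ is impossible here, since it would force $u_2<h(x_2)\le h(0.75)=u_3$.
\end{itemize}
Only after these cases are excluded do you reach $x_2=x^g$, where your Step 3 finishes the proof.
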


In \S\ref{sec:uir_ranking} we will show that the xQuAD ranking function can be adjusted to guarantee both stability and diversity under non-symmetric aspect distributions.

\subsection{Stability and Diversity under xMMR}\label{sub:xmmr}

As discussed in \S\ref{sub:rank diversity examples}, while xQuAD
aims to promote diversity via aspect coverage, xMMR directly
encourages novelty. In \S\ref{sub:xquad} we showed that under
aspect distribution symmetry, the coverage-driven approach used in
xQuAD ranking fails to result in diversity in the publishers'
equilibrium; in the induced game  all publishers adopt the same
strategy, essentially trading diversity for stability. In this
section, we show that xMMR ranking results in a fundamentally
different outcome. As a first step, we note that the \relPro is not
even a Nash equilibrium in xMMR games with symmetric aspect
distributions.

\begin{observation}\label{obs:xmmr copycat zero}
    Let $\game$ be an xMMR \modelName, and let $\publisherIndex > \publisherIndex'$ be two publishers. Then:
    \begin{enumerate}
        \item If $\strategy = \strategyTag$, then the utility of publisher $\publisherIndex$ is $0$.
        \item If a strategy profile $\strategyProfile$ is a Nash equilibrium in $\game$, then $\strategy \neq \strategyTag$.
\end{enumerate}
\end{observation}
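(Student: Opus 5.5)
The plan is to exploit two facts. First, the xMMR retrieval score $\scoreFunc_{xMMR}(\strategy, \aspectDistribution, \partialRank)$ depends on publisher $\publisherIndex$ only through her document $\strategy$. Second, ties are broken lexicographically.

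\textbf{Part 1.} Suppose $\strategy = \strategyTag$ with $\publisherIndex' < \publisherIndex$ (so $\publisherIndex'$ wins ties). I will show by induction over the iterations of Algorithm~\ref{alg:score_func} that $\publisherIndex$ cannot be selected while $\publisherIndex'$ is still unranked. At any iteration $\rankIndex$ where both are unranked, the two documents receive the same score $\scoreFunc(\strategy,\aspectDistribution,\partialRankSpecific{\rankIndex-1}) = \scoreFunc(\strategyTag,\aspectDistribution,\partialRankSpecific{\rankIndex-1})$. Hence if $\publisherIndex$ attains the maximum, so does $\publisherIndex'$, and the lexicographic rule selects $\publisherIndex'$. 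Therefore $\publisherRankLocationTag < \publisherRankLocation$, so $\publisherIndex'$ appears among the positions $\rankIndex' < \publisherRankLocation$. For every aspect $\aspect$ this gives
\[
0 \le \min_{\rankIndex' < \publisherRankLocation} \big(\simPublisher - \simLocationTag\big)^2 \le \big(\simPublisher - \simPublisherTag\big)^2 = 0 .
\]
The integrand of $\publisherUtility^{xMMR}$ thus vanishes pointwise, so $\publisherUtility(\strategyProfile) = 0$.

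\textbf{Part 2.} By Part 1, it suffices to show that whenever $\strategy = \strategyTag$ with $\publisherIndex' < \publisherIndex$, publisher $\publisherIndex$ has a deviation $\strategy'$ with strictly positive utility; that deviation is profitable, so such a profile is not a Nash equilibrium. The integrand $\simFunction(\strategy',\aspectRep)\cdot\min_{\rankIndex'<\publisherRankLocation}(\cdot)^2$ is always nonnegative. It is therefore enough to make it strictly positive on a set of aspects of positive $\aspectDistribution$-probability, for \emph{every} possible resulting rank. Any preceding set of documents is a subset of $\{\strategyTagTag\}_{\publisherIndex''\neq \publisherIndex}$, and the empty set gives $\min\emptyset = 1$. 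So it suffices to find $\strategy'$ such that, on a positive-probability set of aspects:
\begin{itemize}
\item $\simFunction(\strategy',\aspectRep) > 0$, and
\item $\norm{\strategy' - \aspectRep} \neq \norm{\strategyTagTag - \aspectRep}$ for all $\publisherIndex''\neq\publisherIndex$, since equal similarity is equivalent to equal distance.
\end{itemize}

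\textbf{Construction.} Fix a point $\aspectRep^0$ in the support of $\aspectDistribution$. The distances $\norm{\strategyTagTag - \aspectRep^0}$ form a finite set. Choose $\strategy'$ close to $\aspectRep^0$, inside $[0,1]^\dimE$, such that $\norm{\strategy' - \aspectRep^0}$ is small, avoids that finite set, and gives $\simFunction(\strategy',\aspectRep^0) > 0$. This is possible because a continuum of distances is available along a segment from $\aspectRep^0$ into the cube. All the required strict inequalities hold at $\aspectRep^0$. By continuity of $\simFunction$ and of the distances in $\aspectRep$, they also hold on a neighborhood of $\aspectRep^0$. That neighborhood has positive probability because $\aspectRep^0$ is in the support; for atoms this is immediate. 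Hence $\publisherUtility(\strategy',\strategiesOthers) > 0 = \publisherUtility(\strategyProfile)$.

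The main obstacle is this last step: handling general, possibly continuous, distributions, and ensuring positivity regardless of where the deviator lands in the ranking. Both are handled by the support-plus-continuity argument and by bounding the min over an arbitrary subset of the other documents.
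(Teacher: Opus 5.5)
Your proof is correct and follows the paper's own argument. In Part 1, lexicographic tie-breaking places $i'$ above $i$, so the zero term $(s_i^a - s_{i'}^a)^2$ enters the minimum; in Part 2, the profitable deviation is a document whose similarity to a support aspect is positive and differs from every other publisher's similarity, which keeps the minimum positive at any resulting rank.

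Your neighborhood-plus-support step is a small but real improvement. The paper's bound through $P_{P^a}(x_{a_1}) > 0$ tacitly assumes $a_1$ is an atom, whereas your version also covers continuous aspect distributions.
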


The intuition is simple. Under the xMMR ranking, any publisher whose document is identical to at least one previously ranked document results in a zero novelty score. When all documents are identical, as is the case in the \relPro, it is straightforward to see that only publisher 1 (i.e., the first in lexicographic order) receives strictly positive utility (under lexicographical tie-breaking), while any other publisher can deviate and benefit a strictly positive gain in utility. A full proof of the observation can be found in Appendix~\ref{proof obs:xmmr copycat zero}.

While Observation~\ref{obs:xmmr copycat zero} rules out the possibility that the \relSt is a dominant strategy for \emph{all} publishers, it is still true that the \relSt is dominant for publisher 1:

\begin{lemma}\label{lem:xmmr first dominant}
    In any xMMR \modelName, the \relSt is a dominant strategy for publisher 1.
\end{lemma}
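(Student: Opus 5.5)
The plan is to show that, whatever the other publishers do, playing the \relSt puts publisher~1 in the first position, and that the first position combined with the \relSt gives the largest utility any strategy can achieve. Fix arbitrary $\strategiesOthers$, let $\strategy^1 = \aspectEShort{\aspectRep}$ with $\publisherIndex = 1$, and let $\strategy^2 \neq \aspectEShort{\aspectRep}$ be any deviation.

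\emph{Step 1 (the first position is decided by relevance alone).} At the first iteration of Algorithm~\ref{alg:score_func} we have $\partialRankSpecific{0} = \varnothing$. With the convention $\min\emptyset = 1$, the xMMR score of every document $\publisherIndex'$ is $\lambda \aspectEShort{\simPublisherTag} + (1-\lambda)$. So the first position goes to a maximizer of $\aspectEShort{\simPublisherTag}$. By Proposition~\ref{pro:sim maximizer}, $\aspectEShort{\aspectRep}$ is the unique maximizer of this quantity. Hence if publisher~1 plays $\strategy^1$, her score is at least that of every other document. Any document tied with her must also play $\aspectEShort{\aspectRep}$, and lexicographic tie-breaking favours index~1 in that case. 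Therefore publisher~1 is ranked first, i.e.\ $\publisherRankLocationSpecific{1} = 1$, regardless of $\strategiesOthers$.

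\emph{Step 2 (utility at the top).} At position~1 the novelty factor in $\clickModel^{novelty}$ is a minimum over the empty set, which equals~$1$. Hence
\[
\publisherUtilityFunction_1(\strategy^1,\strategiesOthers) = \aspectEShort{\simFunction(\strategy^1,\aspectRep)}.
\]

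\emph{Step 3 (bounding any deviation).} Under $\strategy^2$, let $\publisherRankLocationSpecific{1}$ denote publisher~1's resulting position, whatever it is. Her utility is
\[
\aspectEShort{\simFunction(\strategy^2,\aspectRep)\cdot m},
\]
where $m$ is either $1$ (if she is ranked first) or a minimum of terms $(\simPublisher - \simLocationTag)^2$ (if she is ranked lower). Since all similarities lie in $[0,1]$, each squared difference lies in $[0,1]$, so $0 \le m \le 1$ pointwise. Because $\simFunction \ge 0$, this gives
\[
\publisherUtilityFunction_1(\strategy^2,\strategiesOthers) \le \aspectEShort{\simFunction(\strategy^2,\aspectRep)} < \aspectEShort{\simFunction(\strategy^1,\aspectRep)} = \publisherUtilityFunction_1(\strategy^1,\strategiesOthers).
\]
The strict inequality is again Proposition~\ref{pro:sim maximizer}. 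This establishes dominance in the sense of Definition~\ref{def:dominant strategy}.

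The only delicate point is Step~1. Dominance requires publisher~1 to be first under every profile $\strategiesOthers$, including profiles in which other publishers also play $\aspectEShort{\aspectRep}$. This is exactly where the lexicographic tie-breaking rule and the identical empty-set novelty term at the first step are used. Everything else reduces to the observation that the novelty factor never exceeds~$1$, which removes any need to analyze where the deviating document is ranked.
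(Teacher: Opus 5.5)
Your proposal is correct and follows the paper's proof essentially step for step. Publisher~1 secures the first position by Proposition~\ref{pro:sim maximizer} together with lexicographic tie-breaking, earning $\aspectEShort{\simFunction(\aspectEShort{\aspectRep},\aspectRep)}$, whereas any deviation earns at most its bare expected similarity, which is strictly smaller. You are slightly more explicit than the paper in two places: why the first-round score depends only on relevance (the empty-set convention), and why the novelty factor is bounded by $1$.
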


Note that Lemma~\ref{lem:xmmr first dominant} does not require the symmetry of the aspect distribution. A full proof of the lemma can be found in Appendix~\ref{proof lem:xmmr first dominant}.
Relying on Observation~\ref{obs:xmmr copycat zero} and Lemma~\ref{lem:xmmr first dominant}, one can derive a complete equilibrium characterization for the two-publisher case:

\begin{corollary}\label{cor:xmmr trivial pne}
    In any xMMR two-publisher game, $\strategyProfile$ is a Nash equilibrium if and only if $\strategyProfile_1$ is the \relSt and $\strategyProfile_2$ is a best reply of publisher 2, i.e.,
    $\strategyProfile_2 \in \argmaxOn{\strategyProfile_2}{\publisherUtilityFunction_2(\strategyProfile_1, \strategyProfile_2)}$.
    In addition, in any equilibrium $\strategyProfile_2 \neq \strategyProfile_1$.
\end{corollary}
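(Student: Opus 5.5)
The plan is to prove the two directions of the equivalence separately, relying on Lemma~\ref{lem:xmmr first dominant} and Observation~\ref{obs:xmmr copycat zero}. In a two-publisher game, Lemma~\ref{lem:xmmr first dominant} says that the \relSt $\aspectEShort{\aspectRep}$ is a dominant strategy for publisher 1. This means that for every $\strategyProfile_2$ it is the unique best reply of publisher 1: by Definition~\ref{def:dominant strategy} the inequality is strict, so any other $\strategyProfile_1'$ yields strictly lower utility.

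\emph{Necessity.} Suppose $\strategyProfile$ is a Nash equilibrium. By Definition~\ref{def:pne}, $\strategyProfile_1$ must be a best reply to $\strategyProfile_2$. If $\strategyProfile_1 \neq \aspectEShort{\aspectRep}$, dominance gives $\publisherUtilityFunction_1(\aspectEShort{\aspectRep}, \strategyProfile_2) > \publisherUtilityFunction_1(\strategyProfile_1, \strategyProfile_2)$, which is a profitable deviation and a contradiction. Hence $\strategyProfile_1$ is the \relSt. Likewise, the equilibrium condition for publisher 2 says exactly that $\strategyProfile_2 \in \argmaxOn{\strategyProfile_2}{\publisherUtilityFunction_2(\strategyProfile_1, \strategyProfile_2)}$.

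\emph{Sufficiency.} Suppose $\strategyProfile_1 = \aspectEShort{\aspectRep}$ and $\strategyProfile_2$ is a best reply to it. Publisher 1 cannot gain by deviating, since she already plays her dominant strategy, which is optimal against every $\strategyProfile_2$. Publisher 2 cannot gain by deviating, by the definition of best reply. So no player has a profitable unilateral deviation, and $\strategyProfile$ is a Nash equilibrium.

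\emph{The final claim.} That $\strategyProfile_2 \neq \strategyProfile_1$ in any equilibrium is item~2 of Observation~\ref{obs:xmmr copycat zero}, applied with $\publisherIndex = 2$ and $\publisherIndex' = 1$. To make this self-contained, one can recall the reason. If $\strategyProfile_2 = \strategyProfile_1$, then the tie is broken lexicographically, so publisher 2 is ranked second and her novelty term is $0$, giving utility $0$. Some deviation gives her strictly positive utility: for instance, any $\strategyProfile_2' \neq \strategyProfile_1$ whose similarity to the aspects differs from that of $\strategyProfile_1$ on a set of positive measure and is positive there, so the expected product is positive.

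\emph{Main obstacle.} The only delicate point is whether a best reply of publisher 2 exists at all; otherwise the characterization could be vacuous. That existence is not required for the ``if and only if'' statement, so I would not prove it here. I would only remark that the utility need not be continuous, because rankings switch at ties.
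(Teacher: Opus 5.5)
Your proposal is correct and follows the argument the paper has in mind. The equivalence comes from the strict dominance of the \relSt for publisher~1 (Lemma~\ref{lem:xmmr first dominant}) together with the definition of a best reply, and the distinctness claim is item~2 of Observation~\ref{obs:xmmr copycat zero} with $i=2$, $i'=1$. One small correction to your closing aside: a best reply of publisher~2 does exist here, because when $\strategyProfile_1$ is the \relSt, publisher~2 is always ranked second, so $\publisherUtilityFunction_2(\strategyProfile_1,\cdot)$ is continuous on the compact cube $[0,1]^{\dimE}$ and attains its maximum (this is what supports the existence entry in Table~\ref{tab:summary}).
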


Put differently, in the two-publisher case, xMMR ranking simultaneously achieves stability and diversity, as publishers reach a stable state in which their documents are not identical.
However, introducing a third publisher complicates the analysis, and equilibrium existence is no longer guaranteed even within the symmetric aspect distribution regime.

\begin{example}\label{exa:xmmr no pne}
    Let $\game$ be an xMMR three-publisher game with $\dimE = 1, \lambda = 0.5$ and the following symmetric aspect distribution:
    \begin{equation*}
        \aspectProb = 0.5 \cdot \ind(\aspectRep \in \cb{0, 0.1}).
    \end{equation*} 
    Then, $\game$ possesses no Nash equilibria.
\end{example}

The explanation is based on assuming the existence of an equilibrium by contradiction, and performing a detailed case analysis of the possible ranking outcome, showing that for all possible cases, a contradiction arises, with at least one publisher having a profitable deviation. Some of the analysis is done using numerical methods. Full explanation of the example appears in Appendix~\ref{app:example}.

To conclude, our analysis reveals that while in the two-publisher case, xMMR simultaneously satisfies stability and diversity, it generally induces environments in which diversity is often achieved at the expense of stability, and publishers are expected to frequently modify their documents rather than converge to a stable state when documents are non-identical.

\section{Utility-Induced Ranking}\label{sec:uir_ranking}

In \S\ref{sec:main results lex} we analyzed the \modelPlural induced by the two ranking functions, xQuAD and xMMR, in terms of stability and diversity. As highlighted in \S\ref{sec:model}, the publisher utilities in the games depend both on the ranking function (which determines the ranks given the strategy profile) and the user utility function $\clickModel$.

Our game-theoretic analysis reveals several drawbacks that arise while using the proposed ranking functions. Using the xQuAD ranker leads to a degenerate game in which all publishers choose the same document representation in the symmetric aspect distribution regime (Theorem~\ref{the:xquad lex dominant}). In the asymmetric case, an equilibrium might not exist, which can be interpreted as instability (Observation~\ref{obs:xquad non-sym no pne}). Under xMMR ranking, a diverse equilibrium may exist in the two-publisher case (Corollary~\ref{cor:xmmr trivial pne}), but even in the three-publisher case an equilibrium might not exist, even under aspect distribution symmetry (Example~\ref{exa:xmmr no pne}). Overall, we conclude that under both retrieval methods, simultaneously achieving stability and diversity is extremely difficult when publishers are strategic.

We now turn to present a novel general approach, termed \emph{utility induced ranking} (\textbf{UIR}) to resolving the issue just discussed. Specifically, we present a method of inducing a retrieval score function based on a given user utility function. Thus, each user utility function essentially entails a retrieval score function that guarantees the existence of an equilibrium in games based on this user utility function.
Note that using a UIR retrieval score function does not, in general, guarantee that the induced game admits a diverse equilibrium. In the remainder of this section, we begin by introducing our UIR framework and then we present our general result. Next, we demonstrate that when applying UIR to the aspect-coverage-based and novelty-based user utility functions, the resulting ranking functions can be viewed as variants of the original xQuAD and xMMR, respectively. For xQuAD, we demonstrate that our variant enables a diverse equilibrium under aspect-distribution asymmetry. However, the lack of diversity for symmetric distributions remains. For xMMR, we show that our variant guarantees stability via equilibrium existence. We now define the notion of utility-induced ranking (UIR).

\begin{definition}[Utility-induced ranking]
\label{def:uir_score_fn}
Let $\clickModel$ be a monotone user utility function, such that $\clickModel(\rankIndex, \simAllPublishersRank)$ does not depend on $\simAllPublishersRank[\rankIndex']$ for $\rankIndex' > \rankIndex$.
Then, based on $\clickModel$, the utility-induced ranking (UIR) is the ranking induced by using the following retrieval score function:
\begin{equation*}
        \scoreFunc_{\clickModel}(\strategy, \aspectDistribution, \partialRank) = \aspectEShort{\clickModel(\abs{\partialRank} + 1, \rb{\simPublisherTag}_{\strategyTag \in \partialRank} \mathbin{\|} \simPublisher)}.
\end{equation*}
\end{definition}

Importantly, both the aspect-coverage-based and novelty-based user utility functions satisfy the requirements of Definition~\ref{def:uir_score_fn}.
Our main result regarding utility-induced ranking can now be stated:

\begin{theorem}\label{the:direct pne}
    Let $\clickModel$ be a user utility function satisfying the conditions specified in Definition~\ref{def:uir_score_fn}, and let $\game$ be a \modelName induced by $\clickModel$ and its associated UIR function $\rankFunction_{\clickModel}$.
    Then, $\game$ has at least one equilibrium described by the following recursive formula:
    \begin{equation*}
        \strategy \in \argmaxOn{\strategy \in \embeddingSpaceBase}{\scoreFunc\rb{\strategy, \aspectDistribution, \set{\strategyTag \given \publisherIndex' < \publisherIndex}}}.
    \end{equation*}
    In addition, if the maximizer of each expression is unique, then the Nash equilibrium is unique.
\end{theorem}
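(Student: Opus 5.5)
We construct the profile $x^*$ greedily in the lexicographic order of the publishers and then argue by induction on the publisher index. Choose $x^*_1 \in \arg\max_{x_1} f_{v}(x_1, P^a, \varnothing)$. Then, given $x^*_1,\dots,x^*_{i-1}$, choose $x^*_i \in \arg\max_{x_i} f_{v}(x_i, P^a, (x^*_{i'})_{i'<i})$. Each maximum exists whenever $f_v(\cdot, P^a, L)$ is continuous on the compact set $\mathbb{X}=[0,1]^k$. For the two concrete utilities this holds because $s$ is continuous and the products and minima of continuous functions are continuous. In general we either assume it or interpret $\arg\max$ as in the statement, i.e., presupposing nonemptiness.

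The first claim is that under $x^*$ the UIR ranking places publisher $i$ at position $i$. We prove this by induction on the position $j$. Suppose positions $1,\dots,j-1$ hold publishers $1,\dots,j-1$. By construction $x^*_j$ maximizes $f_v(\cdot,P^a,L_{j-1})$ over all of $\mathbb{X}$, so its score is at least the score of every remaining $x^*_{i}$ with $i>j$. Lexicographic tie-breaking then selects $j$.

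The key identity is that for any profile, the utility of the publisher placed at position $j$ equals her score at step $j$. Indeed, $u_i(x)=\mathbb{E}[v(r_i, S^a_r)]$, and because $v(j,\cdot)$ does not depend on the entries after position $j$, this equals $\mathbb{E}[v(|L_{j-1}|+1,(s^a_{i'})_{i'\in L_{j-1}}\,\|\, s^a_i)] = f_v(x_i,P^a,L_{j-1})$.

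Next we show that no publisher $i$ has a profitable deviation $x_i'$. Fix the others at $x^*_{-i}$. The publishers $1,\dots,i-1$ are unaffected at their steps: at step $j<i$, publisher $j$ still maximizes the score over all of $\mathbb{X}$, and ties go to $j$ because $j<i$. So they still occupy positions $1,\dots,i-1$, and the deviator lands at some position $j\ge i$. If $j=i$, her utility is $f_v(x_i',P^a,L^*_{i-1})\le f_v(x^*_i,P^a,L^*_{i-1})=u_i(x^*)$.

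The main obstacle is the case $j>i$: the deviator is preceded by a longer prefix, which contains $1,\dots,i-1$ plus other documents. We need her utility there to be at most what she would get at position $i$ behind exactly $1,\dots,i-1$. This is where monotonicity of $v$ enters. Let $i'$ be the publisher placed at position $i$ in the deviated ranking. Swapping the deviator with $i'$ moves her up and, by monotonicity, weakly increases the utility from her document. Iterating such swaps brings her to position $i$ with the prefix $1,\dots,i-1$ and a utility of $f_v(x_i',P^a,L^*_{i-1})$. Because $v(j,\cdot)$ ignores later entries, the shuffling of the others below her does not matter. Combined with the previous bound, this gives $u_i(x_i',x^*_{-i})\le u_i(x^*)$, so $x^*$ is an equilibrium. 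We must also check that the swaps are well-defined under the paper's informal monotonicity definition; this requires only a weak inequality.

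For uniqueness, assume every argmax is a singleton and let $x$ be any equilibrium. We show by induction on $i$ that $x_i=x^*_i$. Suppose $x_{i'}=x^*_{i'}$ for all $i'<i$. By the argument above, publishers $1,\dots,i-1$ occupy the top $i-1$ positions regardless of what $i$ plays. Publisher $i$ can guarantee the utility $\max f_v(\cdot,P^a,L^*_{i-1})$ by playing $x^*_i$, since she then wins position $i$. If $x_i\neq x^*_i$, then either she sits at position $i$ with a strictly smaller score, or she sits lower. In the lower case, the monotonicity bound gives at most $f_v(x_i,P^a,L^*_{i-1})$, which is strictly less than the maximum. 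Either way $x_i$ is not a best reply, which contradicts that $x$ is an equilibrium. Hence $x_i=x^*_i$, and the equilibrium is unique.
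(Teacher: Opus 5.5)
Your proposal is correct and follows the paper's proof essentially step for step: the same greedy recursive construction, the same induction showing publisher $i$ lands at position $i$ under lexicographic tie-breaking, the same identity between the utility at position $i$ and the UIR score given the prefix $1,\dots,i-1$, a monotonicity swap with the occupant of position $i$ to handle deviations that land lower, and the same first-differing-index argument for uniqueness. Your continuity-on-a-compact-set remark makes explicit the nonemptiness of each argmax, which the paper leaves implicit; also note that a single swap with the publisher at position $i$ already brings the deviator to position $i$, so no iteration is needed.
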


The proof relies on the concept of iterated removal of (weakly) dominated strategies, a widely used concept in game theory.

\begin{proof}
    Let $\clickModel$ be a monotone user utility function, such that $\clickModel(\rankIndex, \simAllPublishersRank)$ does not depend on $\simAllPublishersRank[\rankIndex']$ for $\rankIndex' > \rankIndex$. In addition, let $\game$ be a \modelName induced by the user utility function $\clickModel$ and its associated UIR function $\rankFunction_{\clickModel}$.

    \paragraph{Part 1 - equilibrium formula}
    We will show that the strategy profile $\strategyProfile$ defined by recursively choosing \begin{equation*}
        \strategy \in \argmaxOn{\strategy \in \embeddingSpaceBase}{\scoreFunc\rb{\strategy, \aspectDistribution, \set{\strategyTag \given \publisherIndex' < \publisherIndex}}}
    \end{equation*} is a Nash equilibrium.

    We start by proving, by induction on $\publisherIndex$, that for every $\publisherIndex \in [\numPublishers]$, publisher $\publisherIndex$'s position in the ranking is her index, i.e., $\publisherRankLocation = \publisherIndex$. For $\publisherIndex = 1$, we know that $\strategyProfile_1 \in \argmaxOn{\strategyProfile_1 \in \embeddingSpaceBase}{\scoreFunc\rb{\strategyProfile_1, \aspectDistribution, \emptyset}}$. Hence, due to the lexicographic tie-breaker, $\publisherRankLocationSpecific{1} = 1$. Assume that $\publisherRankLocationTag = \publisherIndex'$ for every $\publisherIndex' < \publisherIndex$.
    Therefore, $\partialRank = \set{\strategyTag \given \publisherIndex' < \publisherIndex}$ and by definition $\strategy$ is a maximizer of the score function. Hence, due to the lexicographic tie-breaker, $\publisherRankLocation = \publisherIndex$.

    Let us fix some $\publisherIndex$ to be a publisher in the game. Notice that:
    \begin{equation}\label{equ:utility development 1}
    \begin{aligned}
        \publisherUtility(\strategy, \strategiesOthers) & =
        \aspectEShort{\clickModel(\publisherRankLocation, \simAllPublishersRank)}
        \\ &
        \overset{(1)}{=} \aspectEShort{\clickModel(\publisherIndex, \simAllPublishersRank)}
        \\ &
        \overset{(2)}{=} \aspectEShort{\clickModel(\publisherIndex, \rb{\simAllPublishersRank[\rankIndex]}_{\rankIndex \leq \publisherIndex})}
        \\ &
        \overset{(3)}{=} \aspectEShort{\clickModel(\publisherIndex, \rb{\simPublisherTag}_{\publisherIndex' < \publisherIndex} \mathbin{\|} \simFunction(\strategy, \aspectRep))}
        \\ &
        \overset{(4)}{=} \scoreFunc_{\clickModel}(\strategy, \aspectDistribution, \set{\strategyTag \given \publisherIndex' < \publisherIndex})
    \end{aligned}
    \end{equation}
    when:
    \begin{itemize}
        \item Transitions~(1) and (3) are based on the proof by induction for publishers 1 to $\publisherIndex$.
        \item Transition~(2) follows from the independence in lower positions.
        \item Transition~(4) follows from the utility-induced score function's definition.
    \end{itemize}
    
    Suppose by contradiction that she has a strategy $\strategyTilde$ such that $\publisherUtility(\strategy, \strategiesOthers) < \publisherUtility(\strategyTilde, \strategiesOthers)$. We will denote the new ranking in this position by $\rankFunctionTilde$, and the induced similarity scores for aspect $\aspect$ by $\simAllPublishersRankTilde$. In addition, we will denote by $\simFunctionGroup^*$ the induced similarity scores under the $(\strategyTilde, \strategiesOthers)$ profile after switching the positions of publisher $\publisherIndex$ and the publisher ranked in position $\publisherIndex$ (if they are the same publisher, then $\simFunctionGroup^* = \simAllPublishersRankTilde)$.
    
    Note that for every $\publisherIndex' < \publisherIndex$, $\publisherRankTildeLocationTag = \publisherIndex'$, as the proof we showed in the induction earlier is still valid for those positions. Therefore, $\publisherRankTildeLocation \geq \publisherIndex = \publisherRankLocation$. 
    Using this, we get that:
    \begin{equation}\label{equ:utility development 2}
    \begin{aligned}
        \publisherUtility(\strategyTag, \strategiesOthers) & =
        \aspectEShort{\clickModel(\publisherRankTildeLocationTag, \simAllPublishersRankTilde)}
        \\ &
        \overset{(1)}{\leq} \aspectEShort{\clickModel(\publisherIndex, \simFunctionGroup^*)}
        \\ &
        \overset{(2)}{=} \aspectEShort{\clickModel(\publisherIndex, \rb{\simFunctionGroup^*[\rankIndex]}_{\rankIndex \leq \publisherIndex})}
        \\ &
        \overset{(3)}{=} \aspectEShort{\clickModel(\publisherIndex, \rb{\simPublisherTag}_{\publisherIndex' < \publisherIndex} \mathbin{\|} \simFunction(\strategyTilde, \aspectRep))}
        \\ &
        \overset{(4)}{=} \scoreFunc_{\clickModel}(\strategyTilde, \aspectDistribution, \set{\strategyTag \given \publisherIndex' < \publisherIndex})
    \end{aligned}
    \end{equation}
    when:
    \begin{itemize}
        \item Transition~(1) follows from the monotony of $\clickModel$.
        \item Transition~(2) follows from the independence in lower positions.
        \item Transition~(3) follows from the fact that the top $\publisherIndex - 1$ positions are the first $\publisherIndex - 1$ publishers.
        \item Transition~(4) and follow from the utility-induced score function's definition.
    \end{itemize}

    Combining Equations~\eqref{equ:utility development 1} and~\eqref{equ:utility development 2} with the choice of
    \begin{equation*}
        \strategy \in \argmaxOn{\strategy \in \embeddingSpaceBase}{\scoreFunc\rb{\strategy, \aspectDistribution, \set{\strategyTag \given \publisherIndex' < \publisherIndex}}},
    \end{equation*}
    and we get that:
    \begin{equation*}
        \publisherUtility(\strategy, \strategiesOthers) \geq \publisherUtility(\strategyTag, \strategiesOthers).
    \end{equation*}
    This result contradicts the assumption that $\publisherUtility(\strategy, \strategiesOthers) < \publisherUtility(\strategyTilde, \strategiesOthers)$. Therefore, $\strategyProfile$ is a Nash equilibrium.

    One can interpret the strategy selection process as follows. The first publisher chooses a strategy that is optimal for them (a dominant strategy). Given this choice, the second publisher can restrict attention to the subset of strategies that maximize their utility given the first publisher's strategy. This reasoning proceeds analogously for all publishers. The resultant procedure corresponds to the iterated elimination of (weakly) dominated strategies, and the resulting strategy profile constitutes an equilibrium.

    \paragraph{Part 2 - equilibrium uniqueness}
    We now turn to prove the second part of the theorem. Assuming that the maximizer of each expression is unique. We will show that the Nash equilibrium is unique.
    
    Suppose for contradiction that there is more than one Nash equilibria. The maximizer of each expression $\scoreFunc\rb{\strategy, \aspectDistribution, \set{\strategyTag \given \publisherIndex' < \publisherIndex}}$ is unique. Hence, there is a Nash equilibrium denoted $\tilde{\strategyProfile}$ such that $\tilde{\strategyProfile} \neq \strategyProfile$, when $\strategyProfile$ is defined as before. Let $\publisherIndex$ be the first publisher lexicographically such that $\strategyTilde \neq \strategy$.

    Focusing on Equations~\eqref{equ:utility development 1} and~\eqref{equ:utility development 2}, we can see that all transitions are still valid as none of them uses any information about $\strategyTag$ for $\publisherIndex' > \publisherIndex$.
    
    In addition, using the new assumption, we now know that the $\strategy$ is the unique maximizer of $\scoreFunc\rb{\strategy, \aspectDistribution, \set{\strategyTag \given \publisherIndex' < \publisherIndex}}$. Therefore:
    \begin{gather*}
        \scoreFunc_{\clickModel}(\strategy, \aspectDistribution, \set{\strategyTag \given \publisherIndex' < \publisherIndex}) >
        \scoreFunc_{\clickModel}(\strategyTilde, \aspectDistribution, \set{\strategyTag \given \publisherIndex' < \publisherIndex})
        \implies
        \\
        \implies
        \publisherUtility(\strategy, \strategiesOthers) >
        \publisherUtility(\strategyTilde, \strategiesOthers),
    \end{gather*}
    which means that there is a profitable deviation, and $\tilde{\strategyProfile}$ is not a Nash equilibrium.
\end{proof}

We now turn to demonstrate an application of our general UIR framework to the aspect-coverage-based and novelty-based user utility functions. Substituting those specified in \S\ref{sub:user_utility} into Definition~\ref{def:uir_score_fn} yields the following variants of the xQuAD and xMMR score functions:

\begin{equation*}
    \scoreFunc_{UIR-xQuAD}(\strategy, \aspectDistribution, \partialRank) = \aspectEShort{\simPublisher \cdot \underset{\strategyTag \in \partialRank}{\Pi} \big(1 - \simPublisherTag\big)},
\end{equation*}

\begin{equation*}
    \scoreFunc_{UIR-xMMR}(\strategy, \aspectDistribution, \partialRank)
    = \aspectEShort{\simPublisher \cdot \minOn{\strategyTag \in \partialRank}{{(\simPublisher - \simPublisherTag)}^2}}.
\end{equation*}

Using these score functions, we define two corresponding game families: the UIR-xQuAD family, which employs the aspect-coverage-based user utility function, and the UIR-xMMR family, which employs the novelty-based user utility function.

Using Theorem~\ref{the:direct pne}, we can now analyze game instances with the two variants of xQuAD and xMMR ranking functions.
In particular, the following lemma demonstrates how for the UIR-xQuAD retrieval score function, a unique degenerate equilibrium arises under a symmetric aspect distribution, and a non-degenerate equilibrium arises under a non-symmetric aspect distribution.

\begin{lemma}\label{lem:xquad uir unique pne}
    Let $\game$ be a UIR-xQuAD game with $\aspectDistributionSuppSize \geq 2$. Then, $\game$ admits a unique Nash equilibrium. 
    
    Furthermore, the following properties hold:
    \begin{itemize}
        \item If $\aspectDistribution$ is a symmetric distribution, then the \relSt is a dominant strategy for each publisher; thus, the \relPro is the unique equilibrium.
        \item If $\aspectDistribution$ is not a symmetric distribution and $\aspectDistributionSuppSize = 2$, then under the equilibrium, the documents are distinct, i.e., $\strategyProfile_1 \neq \strategyProfile_2$.
    \end{itemize}
\end{lemma}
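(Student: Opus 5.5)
The plan is to derive uniqueness from Theorem~\ref{the:direct pne}, which requires showing that at every stage of the recursion the maximizer of $\scoreFunc_{UIR-xQuAD}$ is unique. Fix the documents $\strategyTag$, $\publisherIndex'<\publisherIndex$, chosen earlier in the recursion. Define the aspect weights $\aspectWeight \coloneqq \prod_{\publisherIndex'<\publisherIndex}(1-\simPublisherTag) \ge 0$, which do not depend on $\strategy$. The stage-$\publisherIndex$ objective is then
\[
\aspectEShort{\aspectWeight\Bigl(1-\tfrac{1}{\dimE}\norm{\strategy-\aspectRep}^2\Bigr)}.
\]
Whenever $\aspectEShort{\aspectWeight}>0$, this is a strictly concave quadratic in $\strategy$. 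Expanding the square, as in Proposition~\ref{pro:sim maximizer}, its unique maximizer is the weighted mean $\aspectEShort{\aspectWeight\aspectRep}/\aspectEShort{\aspectWeight}$. This point is a convex combination of points of $\embeddingSpaceBase=[0,1]^\dimE$, so it lies in the strategy space and no boundary issue arises.

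The key step is to show that the total weight never vanishes. A weight $\aspectWeight$ becomes $0$ only if some earlier document coincides exactly with $\aspectRep$, since $\simPublisherTag=1$ iff $\strategyTag=\aspectRep$. I would argue inductively that the set of support aspects with positive weight always has at least two distinct points. If it does, the next document is a strictly positive combination of at least two distinct points, so it coincides with at most one aspect and removes at most one point from the positive-weight set. In the two-aspect case this closes immediately: the first document is a strict combination of the two aspects, so it equals neither of them. Hence both weights stay positive forever, and the weighted mean can never equal either aspect.

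I expect the main obstacle to be the general support. A weighted mean could land exactly on an interior support point, and the positive-weight set could then shrink. If that happened down to a single point, the next maximizer would still be unique (it is that point), but one stage later all weights would be zero and the objective would be identically $0$. I would first try to rule this out: an aspect killed by a document equal to it is never the weighted mean of the others in a way that exhausts the support. Failing that, I would handle the exhausted case separately, noting that once the objective is constant the publisher's utility is $0$ regardless of the profile. This point needs care.

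For the symmetric bullet, I would reuse Lemma~\ref{lem:xquad recursive maximizer}. If all predecessors play $\aspectEShort{\aspectRep}$, the UIR score of $\aspectEShort{\aspectRep}$ at that stage is the unique maximum. This gives the analogue of Lemma~\ref{lem:xquad lex relevance better rank}: a publisher playing the \relSt is ranked above any publisher not playing it, because at every step where both are unranked, all previously ranked documents play the relevance strategy. The chain of inequalities in the proof of Theorem~\ref{the:xquad lex dominant} then carries over verbatim, since the utility function is the same, and it yields dominance. Proposition~\ref{pro:dominant implies pne} gives that the \relPro is the unique equilibrium.

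For the asymmetric two-aspect bullet, I would compute directly. Let the aspects be $\aspectRepOne,\aspectRepTwo$ with probabilities $p,1-p$, $p\ne\tfrac12$, and let $D=\norm{\aspectRepOne-\aspectRepTwo}^2$. The first document is $\strategyProfile_1=p\aspectRepOne+(1-p)\aspectRepTwo$. Its similarities to the two aspects are $1-(1-p)^2D/\dimE$ and $1-p^2D/\dimE$. The weights are therefore proportional to $p(1-p)^2$ and $(1-p)p^2$, which gives $\strategyProfile_2=(1-p)\aspectRepOne+p\aspectRepTwo$. This differs from $\strategyProfile_1$ exactly when $p\neq\tfrac12$, that is, when the distribution is not symmetric.
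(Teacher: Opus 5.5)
Your overall route is the paper's. You get uniqueness through Theorem~\ref{the:direct pne} by showing that each stage objective has the weighted mean of the aspects as its unique maximizer. You prove the symmetric bullet through Lemma~\ref{lem:xquad recursive maximizer} together with the ranking argument and inequality chain of Theorem~\ref{the:xquad lex dominant}. You prove the two-aspect bullet by direct computation; your closed form $x_2=(1-p)x_{a_1}+p\,x_{a_2}$ is more transparent than the paper's weight comparison via Lemma~\ref{lem:aux two aspects same weight}.

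The genuine gap is the step you flag yourself. For supports with three or more points you never establish that the total stage weight $\mathbb{E}[w_a]$ stays positive. The uniqueness of every stage maximizer, and hence of the equilibrium, rests entirely on this. Your fallback cannot close it. If some stage objective were identically $0$, every strategy would maximize it and the uniqueness clause of Theorem~\ref{the:direct pne} would not apply. That publisher would in fact be indifferent among all her strategies, so uniqueness would fail. Exhaustion has to be ruled out, not handled separately.

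The missing idea is already in your two-aspect remark; apply it at the stage where exactly two positive-weight aspects remain, not at stage one. For finite support, let $A_i$ be the set of support aspects with positive weight at stage $i$.
\begin{itemize}
\item You showed $|A_{i+1}|\ge |A_i|-1$, because $x_i$ coincides with at most one aspect.
\item If $|A_i|=2$, then $x_i$ is a strictly positive convex combination of two distinct points. So it equals neither of them, and $A_{i+1}=A_i$.
\end{itemize}
Hence the count can never pass from $2$ to $1$. It starts at $|\mathrm{supp}(P^a)|\ge 2$ and drops by at most one per stage, so it stays at least $2$ forever and the total weight is always positive. This is precisely the paper's argument, which looks at the last publisher for which two of the products are still positive and derives a contradiction.

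For infinite support nothing more is needed. The earlier documents zero out the weight only on a finite set, and a finite set cannot carry full probability when the support is infinite. With this step inserted, the rest of your proof goes through.
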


The proof of the equilibrium uniqueness is very technical, as we develop a closed formula for the equilibrium strategies based on generalize version of Proposition~\ref{pro:sim maximizer}. The proof of the symmetric case is similar to the proof presented in Theorem~\ref{the:xquad lex dominant}, while the proof of the non-symmetric case is based on finding a closed formula for the equilibrium strategies. A full proof of the lemma can be found in Appendix~\ref{proof lem:xmmr uir pne}. 

The advantage of the existence of a unique equilibrium is that one can predict in advance which equilibrium the system will converge to (in contrast to the case of multiple equilibria, where a certain degree of uncertainty arises).

Moreover, although we lack a theoretical guarantee regarding the diversity of the equilibrium in the case of non-symmetric aspect distributions with $\aspectDistributionSuppSize > 2$, we explored this property empirically. We sampled various aspect distributions for different values of $\numPublishers, \dimE$, and $\aspectDistributionSuppSize$, and calculated the unique Nash equilibrium of the induced UIR-xQuAD game. Our findings show that in all cases, there was at least one pair of distinct documents, and in $99\%$ of the cases, every document in the equilibrium was unique\footnote{We used $\numPublishers \in \{2, \ldots, 10\}, \dimE \in \{1, \ldots, 11\}$, and $\aspectDistributionSuppSize \in \{2, \ldots, 10\}$. For each parameter combination, we sampled $1,000$ aspect distributions with random supports and random probabilities.}.

In addition, when focusing on the UIR-xMMR, we attain a guarantee of stability and diversity, described formally by the following lemma:
\begin{lemma}\label{lem:xmmr uir pne}
    Let $\game$ be a UIR-xMMR \modelName. Then, $\game$ admits at least one Nash equilibrium. In addition, for every equilibrium profile $\strategyProfile$ and for every two publishers $\publisherIndex \neq \publisherIndex'$, $\strategyProfile_\publisherIndex \neq \strategyProfile_{\publisherIndex'}$.
\end{lemma}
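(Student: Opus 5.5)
Existence follows from Theorem~\ref{the:direct pne}, so the plan reduces to two checks and one argument. First, I verify that $\clickModel^{novelty}$ satisfies the hypotheses of Definition~\ref{def:uir_score_fn}. It depends only on positions $\rankIndex' \le \rankIndex$ by construction. For monotonicity, note that swapping a document upward to an earlier position shrinks the set over which the minimum is taken, and a minimum over a subset is at least as large. Here I take ``monotone'' in the weak sense, which is all that transition~(1) of the proof of Theorem~\ref{the:direct pne} uses. With that, Theorem~\ref{the:direct pne} gives an equilibrium built recursively by choosing $\strategy \in \arg\max \scoreFunc_{UIR-xMMR}(\cdot,\aspectDistribution,\{\strategyTag : \publisherIndex'<\publisherIndex\})$. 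The maximizers exist because $\embeddingSpaceBase=[0,1]^\dimE$ is compact and the score is continuous: it is an expectation of continuous bounded functions, and a minimum of finitely many continuous functions is continuous.

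For distinctness in \emph{every} equilibrium, I would mirror Observation~\ref{obs:xmmr copycat zero}. Suppose $\strategyProfile$ is an equilibrium with $\strategy=\strategyTag$ for some $\publisherIndex\ne\publisherIndex'$. Identical documents get identical UIR scores at every step, so whichever of the two is ranked later has some earlier-ranked document with the same similarity vector. That gives $\min_{\rankIndex'<\publisherRankLocation}(\simPublisher-\simLocationTag)^2=0$ pointwise in $\aspect$, so this publisher's utility is $0$. It then suffices to exhibit a deviation $\strategyTilde$ with strictly positive utility, which contradicts equilibrium.

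The main obstacle is building that positive deviation, since its rank changes after deviating. My plan is to show that for any fixed finite set of other documents there is a point $\strategyTilde$ with $\simFunction(\strategyTilde,\aspectRep)>0$ and $\aspectEShort{(\simFunction(\strategyTilde,\aspectRep)-\simFunction(\strategyTag,\aspectRep))^2}>0$ for all other $\publisherIndex'$. Such a point makes the utility strictly positive at \emph{whatever} position it lands. To get it, I would note two facts. First, $\simFunction\ge 1-1$ on the cube, with equality only at opposite corners, so positivity $\aspectDistribution$-a.s. holds for all but a negligible set of choices. Second, for each $\publisherIndex'$ the set of $\strategyTilde$ whose similarity profile coincides $\aspectDistribution$-a.s. with that of $\strategyTag$ is small. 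This is because $\strategyTilde\mapsto\simFunction(\strategyTilde,\aspectRep)$ differs between two points on an open dense set of $\aspectRep$ unless the points agree on the relevant affine hull. When the support of $\aspectDistribution$ is a single point this can fail, and then I would instead use a point at a different distance from that aspect. Choosing $\strategyTilde$ outside these finitely many null/closed-nowhere-dense sets gives $\clickModel^{novelty}$ strictly positive in expectation at any position, so publisher $\publisherIndex$ strictly gains, completing the proof.
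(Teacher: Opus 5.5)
\textbf{Where your proposal matches the paper.} Your existence argument is the paper's: weak monotonicity plus independence from lower positions, then Theorem~\ref{the:direct pne}. Your compactness and continuity remark fills in a point the paper leaves implicit. The zero-utility step for the later-ranked of two identical documents also matches the paper.

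\textbf{The gap: the positive deviation.} You aim for a $\tilde x$ with $s(\tilde x,x_a)>0$ and $\mathbb{E}[(s(\tilde x,x_a)-s(x_{i'},x_a))^2]>0$ \emph{separately} for each other publisher $i'$. You then assert that such a point has positive utility at any position. That implication is false. The novelty utility is $\mathbb{E}[s(\tilde x,x_a)\min_{i'}(s(\tilde x,x_a)-s(x_{i'},x_a))^2]$, with the minimum \emph{inside} the expectation. What you need is a single positive-probability set of aspects on which $s(\tilde x,\cdot)$ differs from every $s(x_{i'},\cdot)$ simultaneously. Pairwise positivity still allows $\tilde x$ to coincide with one document on some aspects and with another on the rest.

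\textbf{Counterexample.} Take $k=1$, aspects $0.4$ and $0.6$ with probability $1/2$ each, and other documents $0.5$ and $0.9$.
\begin{itemize}
\item The point $\tilde x=0.3$ has similarity profile $(0.99,0.91)$. The other profiles are $(0.99,0.99)$ and $(0.75,0.91)$, so both of your pairwise conditions hold, and $0.3$ is not in your excluded set.
\item UIR-xMMR ranks $0.5$ first, since expected similarities are $0.99>0.95>0.83$.
\item It then ranks $0.9$ second, with score $0.0245>0.0029$.
\item So $\tilde x$ lands third, and its utility is exactly $0$.
\end{itemize}
Excluding the $\tilde x$ whose whole profile coincides a.s.\ with some $x_{i'}$ removes the wrong set.

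\textbf{How the paper does it.} It fixes one aspect $a_1$ with $P(a_1)>0$. It chooses $\tilde x$ with $s(\tilde x,x_{a_1})>0$ and $s(\tilde x,x_{a_1})\notin\{s(x_{i'},x_{a_1})\}_{i'\neq i}$. Such a point exists because only finitely many values must be avoided, while $\tilde x\mapsto s(\tilde x,x_{a_1})$ takes a continuum of values. The utility is then at least $P(a_1)\,s(\tilde x,x_{a_1})\min_{i'}(s(\tilde x,x_{a_1})-s(x_{i'},x_{a_1}))^2>0$ wherever $\tilde x$ lands.

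\textbf{How to repair your general version.} Do the genericity jointly in $(\tilde x,a)$.
\begin{itemize}
\item Fix an aspect $y$. The set of $\tilde x$ with $\|\tilde x-y\|=\|x_{i'}-y\|$ for some $i'$, or with $\|\tilde x-y\|^2=k$, is a finite union of spheres plus at most one corner. Hence it is Lebesgue-null.
\item By Tonelli, for Lebesgue-a.e.\ $\tilde x$ the integrand is strictly positive $P^a$-a.s.
\item Its expectation is therefore positive, regardless of which documents end up ranked above $\tilde x$.
\end{itemize}
Repaired this way, your route covers atomless aspect distributions, which the paper's $P(a_1)>0$ step tacitly excludes.
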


In addition to using Theorem~\ref{the:direct pne}, the proof is also based on technical developments that were presented as part of the proof of Observation~\ref{obs:xmmr copycat zero}. A full proof of the lemma can be found in Appendix~\ref{proof lem:xmmr uir pne}.

\paragraph{Convergence of learning dynamics}
Beyond the guaranteed existence of equilibrium, we examined the convergence of learning dynamics for the two variations we presented, UIR-xQuAD and UIR-xMMR. The convergence of learning dynamics is a critical concern, as the mere existence of a theoretical equilibrium does not guarantee that players (publishers in our context) will naturally adopt it. Demonstrating empirical convergence validates that when publishers engage in the defined game sequentially through a learning process, they trend toward an equilibrium. This empirical evidence bridges the gap between abstract game theoretic stability and the actual, predictable behavior of participants in a dynamic environment.

For this purpose, we used the discrete better response dynamics algorithm presented in~\citet{madmon2023search}. In this algorithm, we initialize from a specific profile. In each round, a random publisher is selected to modify their strategy to increase their utility, where the available strategic adjustments are based on a fixed set of step sizes and movement directions in the document's embedding space (algorithm parameters). We define this dynamics as converged when no publisher can deviate to improve their utility by $\epsilon$ (an algorithm parameter) or more.

Using the algorithm, we found that for both games we obtain $100\%$ convergence of the learning dynamics for different values of $\numPublishers, \dimE$ and $\aspectDistributionSupp$, while sampling different aspects' distributions and starting profiles. This means that not only do these games have an equilibrium, but the players can also reach it by a series of logical actions\footnote{We used $\numPublishers \in \{2, \ldots, 10\}, \dimE \in \{1, \ldots, 5\}$, and $\aspectDistributionSuppSize \in \{2, \ldots, 10\}$. For each parameter combination, we sampled $100$ pairs, each pair consists of an aspect distribution with random supports and random probabilities, and a random starting profile (used in the algorithm). In addition, we used $\mathcal{S} = \{2^{-6}, 2^{-5}, \ldots, 2^{-2}, 0.5, 0.6, \ldots, 1 \}$, $\;\mathcal{D} = \big \{ \frac{d}{||d||_2}: d \in \{-1,0,1\}^k \setminus \{ \vec{0} \} \big \}$, $T=1000, \; M=900, \; \varepsilon=10^{-6}$, which are the same parameters used in~\cite{madmon2023search}.}.

\section{Conclusions and Future Work}\label{sec:conclusions}
We presented a game theoretic analysis of a competitive search setting where search-results diversification is applied. We showed that representative diversity-based ranking functions face a fundamental tradeoff under strategic publisher behavior. While the xQuAD method induces strong stability guarantees (for symmetric distributions) at the cost of degenerate, anti-diverse outcomes, the xMMR method may incentivize diverse content but fail to admit equilibria (stability). Our novel utility-induced ranking framework yields diversity-based ranking functions that can lead to both stability and diversity.

Our analysis also gives rise to important directions for future work. We focused on a binary measure of stability, specifically, whether all documents in the equilibrium profile are identical. Future work could define numerical measures of diversity, such as the variance of document representations, to evaluate various aspects of diversity across different ranking functions.

Our main results focus on deterministic, score-based ranking functions with lexicographical tie-breaking. Our analysis can be extended to address \emph{randomized ranking functions}, which will allow, for example, a random tie breaker rule. In addition, it is an open question whether randomized ranking can fundamentally change publishers' incentives and mitigate the diversity-stability tradeoff identified in our work. The exploration of such mechanisms remains a natural and important direction for future work.

\begin{acks}
We thank the reviewers for their comments. The paper is based on work supported in part by the Israel Science Foundation (grant no. 403/22).
\end{acks}

\bibliographystyle{ACM-Reference-Format}
\balance 
\bibliography{references}

\appendix

\section{Omitted Proofs}\label{app:proofs}

Before presenting the proofs of the claims in the main text, we introduce the following auxiliary lemma, which will be used several times in the proofs that follow.

\begin{lemma}\label{lem:aux two aspects same weight}
    Let $\aspectDistribution$ be an aspect distribution with $\aspectDistributionSupp = \{\aspect_1, \aspect_2\}$, such that $\aspect_1 \neq \aspect_2$. The following three conditions are equivalent:
    \begin{enumerate}
        \item $\aspectDistribution$ is symmetric.
        \item $\norm{\aspectEShort{\aspectRep} - \aspectRepOne} = \norm{\aspectEShort{\aspectRep} - \aspectRepTwo}$.
        \item $\aspectProbSpecific{1} = \aspectProbSpecific{2} = 0.5$
        \item $\norm{\aspectEShort{\aspectRep} - \aspectRepOne} = \frac{1}{2}\norm{\aspectRepOne - \aspectRepTwo}$.
    \end{enumerate}
\end{lemma}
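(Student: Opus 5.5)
The plan is to prove the equivalences with the mean $\mu \coloneqq \aspectEShort{\aspectRep}$ written explicitly. Let $p \coloneqq \aspectProbSpecific{1}$, so $1-p = \aspectProbSpecific{2}$ and $p\in(0,1)$ because both aspects lie in the support. Then $\mu = p\,\aspectRepOne + (1-p)\,\aspectRepTwo$, which gives
\[
\mu - \aspectRepOne = (1-p)(\aspectRepTwo - \aspectRepOne), \qquad \mu - \aspectRepTwo = p(\aspectRepOne - \aspectRepTwo).
\]
Taking norms and writing $D \coloneqq \norm{\aspectRepOne - \aspectRepTwo} > 0$ yields $\norm{\mu - \aspectRepOne} = (1-p)D$ and $\norm{\mu-\aspectRepTwo} = pD$. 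These two identities drive all the equivalences among conditions (2), (3) and (4).

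The chain $(2)\Leftrightarrow(3)\Leftrightarrow(4)$ is then routine. Condition (2) reads $(1-p)D = pD$, which is equivalent to $p = 1/2$ since $D>0$; this is condition (3). Condition (4) reads $(1-p)D = D/2$, which is again equivalent to $p=1/2$.

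It remains to link (1) with (3).

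For $(3)\Rightarrow(1)$: when $p=1/2$, both aspects lie at the same distance $D/2$ from $\mu$. For $\aspectD = D/2$, the conditional distribution is uniform on $\{\aspectRepOne,\aspectRepTwo\}$, so its mean is $\mu$. For any other $\aspectD$ the conditioning event has probability zero. I would treat that case as vacuous, adopting the convention implicit in Definition~\ref{def:sym dist} that the condition only concerns distances $\aspectD$ of positive probability. I will state this convention explicitly.

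For $(1)\Rightarrow(2)$, argue by contraposition. Suppose the distances differ, say $\norm{\mu-\aspectRepOne} = \aspectD_1 \neq \aspectD_2 = \norm{\mu-\aspectRepTwo}$. Conditioning on distance $\aspectD_1$ isolates the single point $\aspectRepOne$, so the conditional expectation equals $\aspectRepOne$. Symmetry would then force $\aspectRepOne = \mu$, i.e.\ $(1-p)D = 0$, contradicting $p<1$ and $D>0$.

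The main obstacle is purely definitional: interpreting the conditional expectation on null events in Definition~\ref{def:sym dist}. Everything else is a one-line computation. I would handle it by stating that the symmetry requirement is understood for $\aspectD$ in the (finite) set of realized distances, which is the only reading under which any nondegenerate distribution can be symmetric.
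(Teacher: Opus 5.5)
Your proof is correct and follows essentially the same route as the paper. The identities $\norm{\mu - x_{a_1}} = (1-p)D$ and $\norm{\mu - x_{a_2}} = pD$ give (2)$\Leftrightarrow$(3)$\Leftrightarrow$(4) exactly as in the paper, and symmetry is tied in by the same conditioning-on-distance argument; you only close the loop via (3)$\Rightarrow$(1) instead of (2)$\Rightarrow$(1), and derive $x_{a_1}=\mu$ instead of $x_{a_1}=x_{a_2}$. You also make explicit two points the paper leaves implicit: that $D>0$ is needed (if $x_{a_1}=x_{a_2}$, condition (3) could fail while the others hold), and a convention for conditioning on null events.
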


\begin{proof}
    We prove the equivalence by showing $(1) \implies (2)$, $(2) \implies (1)$, and then showing both $(2) \iff (3)$ and $(3) \iff (4)$ together.
    
    \textbf{(1) $\implies$ (2):}  
    Assume $\aspectDistribution$ is symmetric.
    Suppose for contradiction that $\norm{\aspectEShort{\aspectRep} - \aspectRepOne} \neq \norm{\aspectEShort{\aspectRep} - \aspectRepTwo}$. By the symmetric distribution definition:
    \begin{align*}
        \aspectRepOne &
        = \aspectEShort{\aspectRep \mid \norm{\aspectEShort{\aspectRep} - \aspectRep} = \norm{\aspectEShort{\aspectRep} - \aspectRepOne} }
        = \\ & =
        \aspectEShort{\aspectRep \mid \norm{\aspectEShort{\aspectRep} - \aspectRep} = \norm{\aspectEShort{\aspectRep} - \aspectRepTwo}} = \aspectRepTwo
    ,\end{align*}
    and we reached a contradiction. Therefore, $\norm{\aspectEShort{\aspectRep} - \aspectRepOne} = \norm{\aspectEShort{\aspectRep} - \aspectRepTwo}$.

    \textbf{(2) $\implies$ (1):}  
    If $\norm{\aspectEShort{\aspectRep} - \aspectRepOne} = \norm{\aspectEShort{\aspectRep} - \aspectRepTwo}$, then the distance $\norm{\aspectEShort{\aspectRep} - \aspectRep}$ is equal for both aspects, therefore
    \begin{equation*}
        \aspectEShort{\aspectRep \mid \norm{\aspectEShort{\aspectRep} - \aspectRep} = \aspectD} = \aspectEShort{\aspectRep}.
    \end{equation*}

    \textbf{(2) $\iff$ (3) and (3) $\iff$ (4):}  
    Using the expected value definition:

    \begin{gather*}
        \begin{aligned}
        \norm{\aspectEShort{\aspectRep} - \aspectRepOne} & = \norm{\aspectRepOne \cdot \aspectProbSpecific{1} + \aspectRepTwo \cdot \aspectProbSpecific{2} - \aspectRepOne}
        \\ & = \norm{\aspectRepTwo \cdot \aspectProbSpecific{2} - \aspectRepOne \cdot (1 - \aspectProbSpecific{1})} 
        \\ & = 
        \norm{\aspectRepTwo \cdot \aspectProbSpecific{2} - \aspectRepOne \cdot \aspectProbSpecific{2}}
        \\ & = 
        \aspectProbSpecific{2} \norm{\aspectRepTwo - \aspectRepOne}
        \end{aligned}
        \\
        \begin{aligned}
        \norm{\aspectEShort{\aspectRep} - \aspectRepTwo} & = \norm{\aspectRepOne \cdot \aspectProbSpecific{1} + \aspectRepTwo \cdot \aspectProbSpecific{2} - \aspectRepTwo}
        \\ & = \norm{\aspectRepOne \cdot \aspectProbSpecific{1} - \aspectRepTwo \cdot (1 - \aspectProbSpecific{2})} 
        \\ & = 
        \norm{\aspectRepOne \cdot \aspectProbSpecific{1} - \aspectRepTwo \cdot \aspectProbSpecific{1}}
        \\ & = 
        \aspectProbSpecific{1} \norm{\aspectRepOne - \aspectRepTwo}
        = \aspectProbSpecific{1} \norm{\aspectRepTwo - \aspectRepOne}
        \end{aligned}
    \end{gather*}
    
    Therefore, $\norm{\aspectEShort{\aspectRep} - \aspectRepOne} = \norm{\aspectEShort{\aspectRep} - \aspectRepTwo}$ if and only if $\aspectProbSpecific{1} = \aspectProbSpecific{2} = 0.5$.

    In addition, if $\norm{\aspectEShort{\aspectRep} - \aspectRepOne} = \aspectProbSpecific{2} \norm{\aspectRepTwo - \aspectRepOne}$, then $\aspectProbSpecific{1} = \aspectProbSpecific{2} = 0.5$ is true if and only if $\norm{\aspectEShort{\aspectRep} - \aspectRepOne} = \frac{1}{2}\norm{\aspectRepOne - \aspectRepTwo}$ is true.
    
    Combining the three parts, we conclude that the four conditions are equivalent.
\end{proof}

\subsection{Lemma~\ref{lem:xquad recursive maximizer}}\label{proof lem:xquad recursive maximizer}
\begin{proof}    
    Let $\aspectDistribution$ be a symmetric aspect distribution, and let $\publishersSet' \subseteq \publishersSet$ be a set of publishers. Assuming that $\forall \publisherIndex' \in \publishersSet', \; \strategyTag = \aspectEShort{\aspectRep}$, we will show that:
    \begin{equation*}
        \argmaxOn{\strategy}{\aspectEShort{\simPublisher \cdot \underset{\publisherIndex' \in \publishersSet'}{\Pi} \big(1 - \simPublisherTag\big)}} = \aspectEShort{\aspectRep}.
    \end{equation*}
    We denote $\simFunction(\aspectEShort{\aspectRep}, \aspectRep)=\simExpected$, and we get that:
    \begin{gather*}
        \argmaxOn{\strategy}{\aspectEShort{\simPublisher \cdot \underset{\publisherIndex' \in \publishersSet'}{\Pi} \big(1 - \simPublisherTag\big)}}
        = \\ =
        \argmaxOn{\strategy}{\aspectEShort{\simPublisher \cdot \underset{\publisherIndex' \in \publishersSet'}{\Pi} \big(1 - \simExpected\big)}}
        = \\ = 
        \argmaxOn{\strategy}{\aspectEShort{\simPublisher \cdot (1 - \simExpected)^{\abs{\publishersSet'}}}}
    .\end{gather*}
    
    Let $\aspectD \geq 0$ such that there is some aspect representation $\aspectRep$ such that $\norm{\aspectEShort{\aspectRep} - \aspectRep} = \aspectD$. Notice that for every aspect:
    \begin{equation*}
        \norm{\aspectEShort{\aspectRep} - \aspectRep} = \aspectD \implies (1 - \simExpected) = \frac{1}{\dimE}\norm{\aspectEShort{\aspectRep} - \aspectRep}^2 = \frac{\aspectD^2}{\dimE}.
    \end{equation*}
    Using the symmetric aspect distribution definition (Definition~\ref{def:sym dist}), we get that:
    \begin{gather*}
        \argmaxOn{\strategy}{\aspectEShort{\simPublisher \cdot (1 - \simExpected)^{\abs{\publishersSet'}} \mid \norm{\aspectEShort{\aspectRep} - \aspectRep} = \aspectD}}
        \\
        \overset{(1)}{=} \argmaxOn{\strategy}{\aspectEShort{\simPublisher \cdot {(\frac{\aspectD^2}{\dimE})}^{\abs{\publishersSet'}} \mid \norm{\aspectEShort{\aspectRep} - \aspectRep} = \aspectD}}
        \\
        = \argmaxOn{\strategy}{\aspectEShort{\simPublisher \mid \norm{\aspectEShort{\aspectRep} - \aspectRep} = \aspectD}} 
        \\
        \overset{(2)}{=} \aspectEShort{\aspectRep \mid \norm{\aspectEShort{\aspectRep} - \aspectRep} = \aspectD }
        \overset{(3)}{=} \aspectEShort{\aspectRep}
    ,\end{gather*}
    
    where transition~(1) follows from the fact that ${\left(\frac{\aspectD^2}{\dimE}\right)}^{\abs{\publishersSet'}}$ does not depend on $\strategy$,
    transition~(2) follows from Proposition~\ref{pro:sim maximizer},
    and transition~(3) follows from the definition of a symmetric aspect distribution.
    Since $\aspectEShort{\aspectRep}$ is the unique maximizer for each $\aspectD$ independently, it is also the unique maximizer for the unconditional distribution $\aspectDistribution$.
\end{proof}

\subsection{Lemma~\ref{lem:xquad lex relevance better rank}}\label{proof lem:xquad lex relevance better rank}
We will present a more general lemma and prove it, which would also provide proof for the original lemma.

\begin{lemma}\label{lem:xquad general relevance better rank}
     Let $\game$ be a \modelName with a symmetric aspect distribution. 
     Assume that $\rankFunction$ is a score-based ranking function that uses the xQuAD score with $\lambda \in \bb{0,1}$ and an arbitrary tie-breaking rule (not necessarily lexicographic).
     For every $\publisherIndex, \publisherIndex' \in \publishersSet$ s.t $\strategy = \aspectEShort{\aspectRep}$ and $\strategyTag \neq \aspectEShort{\aspectRep}$, it holds that $\publisherRankLocation < \publisherRankLocationTag$.
\end{lemma}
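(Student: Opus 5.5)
The plan is to follow Algorithm~\ref{alg:score_func} step by step and show a stronger invariant by induction on the position $\rankIndex$. The invariant is that, as long as some document equal to $\aspectEShort{\aspectRep}$ is still unranked, the document selected at position $\rankIndex$ equals $\aspectEShort{\aspectRep}$. Consequently, every document at the \relSt occupies a position strictly above every document not at the \relSt, which is exactly the claim $\publisherRankLocation < \publisherRankLocationTag$. Note that the claim itself does not say what $\partialRank$ looks like at the moment publisher $\publisherIndex$ is chosen. The invariant supplies this: when a \relSt document is being compared with a non-\relSt one, every previously ranked document plays $\aspectEShort{\aspectRep}$.

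The inductive step works as follows. Fix a step $\rankIndex$ at which some \relSt document $\publisherIndex$ remains unranked, and assume by induction that $\partialRank_{\rankIndex-1}$ consists only of documents equal to $\aspectEShort{\aspectRep}$. Take any unranked $\publisherIndex'$ with $\strategyTag \neq \aspectEShort{\aspectRep}$. We compare the two terms of $\scoreFunc_{xQuAD}$ separately.
\begin{itemize}
\item \textbf{Relevance term.} Proposition~\ref{pro:sim maximizer} gives $\aspectEShort{\simPublisher} > \aspectEShort{\simPublisherTag}$ strictly.
\item \textbf{Aspect-coverage term.} Apply Lemma~\ref{lem:xquad recursive maximizer} with $\publishersSet'$ equal to the set of publishers in $\partialRank_{\rankIndex-1}$; the hypothesis of that lemma holds by the inductive assumption. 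Since $\aspectEShort{\aspectRep}$ is the unique maximizer, $\aspectEShort{\simPublisher \Pi_{\publisherIndex''\in\partialRank}(1-\simPublisherTagTag)} > \aspectEShort{\simPublisherTag \Pi_{\publisherIndex''\in\partialRank}(1-\simPublisherTagTag)}$.
\end{itemize}
For any $\lambda \in [0,1]$, the score is a convex combination of two strictly ordered quantities, so $\scoreFunc_{xQuAD}(\strategy,\cdot,\partialRank) > \scoreFunc_{xQuAD}(\strategyTag,\cdot,\partialRank)$. The inequality is strict, so the tie-breaking rule is irrelevant: no non-\relSt document can be the argmax, and the selected document is a \relSt one, preserving the invariant. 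Ties among the \relSt documents themselves may be broken arbitrarily, which is harmless. The base case $\partialRank_0 = \varnothing$ is the same argument with the empty product.

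The main obstacle is the strictness of the coverage comparison at the endpoints $\lambda \in \{0,1\}$, where only one term survives. For $\lambda = 1$, Proposition~\ref{pro:sim maximizer} suffices. For $\lambda = 0$, we rely entirely on the uniqueness statement of Lemma~\ref{lem:xquad recursive maximizer}. This uniqueness can fail in the degenerate case where $\aspectDistribution$ is a point mass and $\partialRank \neq \varnothing$: then every aspect is at distance $\aspectD = 0$ from the mean, the weight $(\aspectD^2/\dimE)^{\abs{\partialRank}}$ vanishes, and every document scores $0$. I would handle this by noting that in such a case the coverage term is identically zero and the lemma must be read with a nondegenerate distribution. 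Alternatively, I would check that the conditional decomposition in the proof of Lemma~\ref{lem:xquad recursive maximizer} still gives strictness whenever some distance level $\aspectD > 0$ has positive probability. The generality of $\lambda \in [0,1]$ then holds modulo this caveat, and for $\lambda \in (0,1)$ it holds unconditionally thanks to the relevance term. Finally, Lemma~\ref{lem:xquad lex relevance better rank} follows as the special case with $\lambda \in (0,1)$ and lexicographic tie-breaking.
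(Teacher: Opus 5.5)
Your proof is correct and is essentially the paper's argument. The paper takes the highest-ranked publisher $i'$ not playing $\aspectEShort{\aspectRep}$, notes that everyone above her plays $\aspectEShort{\aspectRep}$, and gets a contradiction from the strict score comparison supplied by Proposition~\ref{pro:sim maximizer} and Lemma~\ref{lem:xquad recursive maximizer}; your induction on positions is the same reasoning stated directly rather than as a minimal counterexample.

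Your caveat is also justified, and it is a case the paper overlooks:
\begin{itemize}
\item For a point-mass $\aspectDistribution$ (symmetric under Definition~\ref{def:sym dist}) with $\lambda = 0$, every score after the first position is $0$. So the statement fails even under lexicographic tie-breaking: e.g., $\strategyProfile_1 = \strategyProfile_3 = \aspectEShort{\aspectRep} \neq \strategyProfile_2$ puts publisher~2 above publisher~3.
\item The flaw is that the proof of Lemma~\ref{lem:xquad recursive maximizer} drops the factor $(\aspectD^2/\dimE)^{\abs{\publishersSet'}}$, which vanishes at $\aspectD = 0$.
\item The right nondegeneracy condition is simply that $\aspectDistribution$ is not a point mass. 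This holds in the paper's only $\lambda = 0$ application, Lemma~\ref{lem:xquad uir unique pne}, since it assumes $\aspectDistributionSuppSize \geq 2$.
\end{itemize}
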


Notice that in the standard version of the xQuAD score function $\lambda \in \rb{0, 1}$, and we would prove the lemma for a bigger range.

\begin{proof}
    Remember that the score function is:
    \begin{align*}
        \scoreFunc_{xQuAD}(\strategy, \aspectDistribution, \partialRank) & =    
        \aspectEShort{\lambda\simPublisher + (1 - \lambda)\simPublisher \cdot \underset{\strategyTag \in \partialRank}{\Pi} \big(1 - \simPublisherTag\big)}
        \\ & = \lambda\aspectEShort{\simPublisher} + (1 - \lambda) \aspectEShort{\simPublisher \cdot \underset{\strategyTag \in \partialRank}{\Pi} \big(1 - \simPublisherTag\big)}
    .\end{align*}

    By Proposition~\ref{pro:sim maximizer}, $\aspectEShort{\aspectRep}$ is the unique maximizer of the first component, and by Lemma~\ref{lem:xquad recursive maximizer}, $\aspectEShort{\aspectRep}$ is the unique maximizer of the first component. Therefore, $\aspectEShort{\aspectRep}$ is the unique maximizer of the score function.
    
    Suppose for contradiction that the claim is not true and that there is one publisher or more who do not play $\aspectEShort{\aspectRep}$ and have a better rank than $\publisherIndex$. Let $\publisherIndex'$ be the highest-ranked publisher that does not play $\aspectEShort{\aspectRep}$. We will denote the publishers that were ranked before publisher $\publisherIndex'$ by $\partialRank^{\publisherRankLocationTag}$ (it can be the null group if $\publisherRankLocationTag$ is 1). Notice that by the definition of $\publisherIndex'$, $\forall \publisherIndex'' \in \partialRank^{\publisherRankLocationTag}, \; \strategyTagTag = \aspectEShort{\aspectRep}$. Therefore, we reached a contradiction as publisher $\publisherIndex$ has a better score than publisher $\publisherIndex'$, which means that $\publisherIndex'$ should not have been ranked at this position.
\end{proof}

We proved the general case lemma, then the original Lemma~\ref{lem:xquad lex relevance better rank} in which the tie-breaker is lexicographic is also true.

\subsection{Observation~\ref{obs:xquad non-sym no pne}}\label{proof obs:xquad non-sym no pne}
During the proof, we will use the following definition
\begin{definition}\label{def:best response}
    A strategy $\strategy \in \embeddingSpace$ is a best response to $\strategiesOthers \in \prod_{\publisherIndex' \neq \publisherIndex} \embeddingSpaceSpecific{\publisherIndex'}$ if $\; \forall \strategy' \in \embeddingSpace, \; \publisherUtility(\strategy, \strategiesOthers) \geq \publisherUtility(\strategy', \strategiesOthers)$.
\end{definition}
\begin{proof}
    Let $\game$ be an xQuAD three-publisher game with $\dimE = 1, \lambda = 0.5$ and the following non-symmetric aspect distribution:
    \begin{equation*}
        \aspectProb = 0.75 \cdot \ind(\aspectRep = 0) + 0.25 \cdot \ind(\aspectRep = 1).
    \end{equation*}

     Before we begin, note that the distribution we have presented fails to satisfy the condition equivalent to symmetry (Auxiliary Lemma~\ref{lem:aux two aspects same weight}) and is therefore non-symmetric.

    Suppose for contradiction that there is some profile of $\strategyProfile = (\strategyProfile_1, \strategyProfile_2, \strategyProfile_3)$ which is NE.

    We start by showing that playing the \relSt of $\aspectEShort{\aspectRep}$ is a dominant strategy for publisher 1. 
    First, notice that in the xQuAD game, the user utility function is monotone, as the elements in $\underset{\rankIndex' < \rankIndex}{\Pi} \big(1 - \simLocationTag\big)$ are smaller than or equal to $1$. Second, as the \relSt is the unique maximizer of $\aspectEShort{\simPublisher}$, we can deduce that if publisher 1 plays it, then with the lexicographic tie breaker, she can secure the first position (no matter what the strategies of the other publishers). Combining this with the fact that the expected gain of the user from the first position is also $\aspectEShort{\simPublisher}$ and the monotony of the user utility function, we can conclude that the \relSt is a dominant strategy for publisher 1.

    If \relSt is a dominant strategy for publisher 1, then in NE, $\strategyProfile_1 = \aspectEShort{\aspectRep} = 0.25$. Based on that, we can find the score function for the second-place comparison and the utility function of the second-ranked publisher:
    \begin{align*}
        \scoreFunc_{xQuAD} & (\strategy; \aspectDistribution, \strategyProfile_1) =
        \frac{\simFunction^{\aspect_1}_\publisherIndex +  {\simFunction^{\aspect_1}_\publisherIndex \cdot (1 - \simFunction^{\aspect_1}_1)}}{2 \cdot \frac{1}{0.75}} + 
        \frac{\simFunction^{\aspect_2}_{\rankLocationSpecific{2}} +  {\simFunction^{\aspect_2}_{\rankLocationSpecific{2}} \cdot (1 - \simFunction^{\aspect_2}_1)}}{2 \cdot \frac{1}{0.25}}
        \\ & =
        \tfrac{3}{8} \Bigr( \bigl(1 - (\strategy)^2\bigr) \cdot \tfrac{17}{16} \Bigr) + 
        \tfrac{1}{8} \Bigr( \bigl(1 - ((\strategy - 1)^2\bigr) \cdot \tfrac{25}{16} \Bigr)
        \\ & =
        - \tfrac{19}{32}(\strategy)^2
        + \tfrac{25}{64}\strategy
        + \tfrac{51}{128}
    ,\end{align*}
    \begin{align*}
        \publisherUtilityFunction_{\rankLocationSpecific{2}} & (\strategyProfile_{\rankLocationSpecific{2}}; \strategyProfile_1) = 
        0.75 \cdot \Bigl(\simFunction^{\aspect_1}_{\rankLocationSpecific{2}} \cdot (1 - \simFunction^{\aspect_1}1)\Bigr) + 
        0.25 \cdot \Bigl(\simFunction^{\aspect_2}_{\rankLocationSpecific{2}} \cdot (1 - \simFunction^{\aspect_2}_1)\Bigr)
        \\ & =
        0.75 \cdot \Bigl((1 - (\strategyProfile_{\rankLocationSpecific{2}})^2) \cdot \tfrac{1}{16}\Bigr) + 
        0.25 \cdot \Bigl((1 - (\strategyProfile_{\rankLocationSpecific{2}} - 1)^2) \cdot \tfrac{9}{16}\Bigr)
        \\ & =
        -\tfrac{3}{16}(\strategyProfile_{\rankLocationSpecific{2}})^2
        + \tfrac{9}{32}\strategyProfile_{\rankLocationSpecific{2}}
        + \tfrac{3}{64}
    .\end{align*}

    Notice that $\scoreFunc_{xQuAD}(\strategy; \aspectDistribution, \strategyProfile_1)$ is a convex function with a global maximum at $-\tfrac{25 / 64}{-2 \cdot 19 / 32} = \tfrac{25}{76}$ and no additional local maximum. In addition, $\publisherUtilityFunction_{\rankLocationSpecific{2}}(\strategyProfile_{\rankLocationSpecific{2}}; \strategyProfile_1)$ is a convex function with a global maximum at $-\tfrac{9 / 32}{-2 \cdot 3 / 16} = \tfrac{3}{4}$ and no additional local maximum.

    Now, we will split into cases based on the strategies of publishers 2 and 3.
    \begin{itemize}
        \item If $\publisherUtilityFunction_2(\strategyProfile) < \publisherUtilityFunction_3(\strategyProfile)$: Publisher 2 can improve her utility by playing $\strategyProfile_3$ as in this case she will be ranked second (due to the tie-breaker). If publisher 3 is currently ranked second, then by playing $\strategyProfile_3$ publisher 2 will get utility of $\publisherUtilityFunction_3(\strategyProfile)$, and if publisher 3 is currently ranked third, publisher 2 will get utility bigger than or equal to $\publisherUtilityFunction_3(\strategyProfile)$, as the user utility function is monotone.
        \item If $\strategyProfile_2$ is the global maximizer of $\scoreFunc_{xQuAD}(\strategy; \aspectDistribution, \strategyProfile_1)$: 
        In this case, publisher 3 will be ranked third regardless of the strategy chosen. We get that:
        \begin{align*}
            \publisherUtilityFunction_{3} (\strategyProfile_3; \strategyProfile_1, \strategyProfile_2) & = 
            0.75 \cdot \Bigl(\simFunction^{\aspect_1}_3 \cdot (1 - \simFunction^{\aspect_1}_1) \cdot (1 - \simFunction^{\aspect_1}_2)\Bigr) 
            \\ & \quad \quad + 
            0.25 \cdot \Bigl(\simFunction^{\aspect_2}_1 \cdot (1 - \simFunction^{\aspect_2}_1) \cdot (1 - \simFunction^{\aspect_1}_2)\Bigr)
            \\ & =
            0.75 \cdot \Bigl((1 - (\strategyProfile_3)^2) \cdot \tfrac{1}{16} \cdot \tfrac{625}{5776} \Bigr)
            \\ & \quad \quad + 
            0.25 \cdot \Bigl((1 - (\strategyProfile_3 - 1)^2) \cdot \tfrac{9}{16} \cdot \tfrac{2601}{5776}\Bigr)
            \\ & = 
            \tfrac{1875}{369664} \cdot (1 - (\strategyProfile_3)^2)
            +
            \tfrac{23409}{369664} \cdot \Bigl(2 \strategyProfile_3 -(\strategyProfile_3)^2\Bigr)
            \\ & = 
            -\frac{25284}{369664}(\strategyProfile_3)^2
            + \frac{46818}{369664}\strategyProfile_3
            + \frac{1875}{369664}
        \end{align*}
        Therefore, publisher 3's best response is to play $\strategyProfile_3 = 1$. If publisher 3 does play the best response, then the score of publisher 2 is bigger than the score of publisher 3. This means that there is $0 < \epsilon$ such that if publisher 2 deviates and plays $\strategyProfile_2 + \epsilon$ her score is still bigger than publisher 3's score while her utility is increased, so this is a profitable deviation.
        \item If $\publisherUtilityFunction_2(\strategyProfile) > \publisherUtilityFunction_3(\strategyProfile)$ and $\strategyProfile_2$ is \textbf{not} the global maximizer of $\scoreFunc_{xQuAD}(\strategy; \aspectDistribution, \strategyProfile_1)$: As $\scoreFunc_{xQuAD}(\strategy; \aspectDistribution, \strategyProfile_1)$ has no local maximizer other the global maximizer, we know that for every $0 < \epsilon$ there is a strategy in $[\strategyProfile_2 - \epsilon, \strategyProfile_2 + \epsilon] \cap [0, 1]$ which results in a higher score value than the current score of publisher 2 (for the second position). Therefore, if we take $\epsilon \to 0$, for sufficiently small $\epsilon$, publisher 3 can deviate and get a higher score than publisher 2 while having a utility that converges to publisher 2's current utility (as the utility function is continuous if the rank is not changing), which must be a profitable deviation.
        \item If $\publisherUtilityFunction_2(\strategyProfile) = \publisherUtilityFunction_3(\strategyProfile)$ and publisher 2 is ranked second: Notice that in this profile publisher 3's utility is:
        \[
            \aspectEShort{\simPublisherSpecific{3} \cdot (1 - \simPublisherSpecific{1})(1 - \simPublisherSpecific{2})}.
        \] 
        In addition, it is easy to see that in $[0,1]$ there is no publisher 2's document for which both $\simFunction(\strategyProfile_2,0) = 1 - \norm{\strategyProfile_2}^2$ and $\simFunction(\strategyProfile_2,1) = 1 - \norm{1 - \strategyProfile_2}^2$ are $0$, i.e. one of them is positive. Therefore, 
        
        \begin{align*}
            \publisherUtilityFunction_2(\strategyProfile_1, \strategyProfile_2, \strategyProfile_3) &= 
            \publisherUtilityFunction_3(\strategyProfile_1, \strategyProfile_2, \strategyProfile_3) =
            \aspectEShort{\simPublisherSpecific{3} \cdot (1 - \simPublisherSpecific{1})(1 - \simPublisherSpecific{2})} \\&< 
            \aspectEShort{\simPublisherSpecific{3} \cdot (1 - \simPublisherSpecific{1})} = \publisherUtilityFunction_2(\strategyProfile_1, \strategyProfile_3, \strategyProfile_3),
        \end{align*}
        
        which means that a deviation to $\strategyProfile_3$ is a profitable deviation for publisher 2.
        \item If $\publisherUtilityFunction_2(\strategyProfile) = \publisherUtilityFunction_3(\strategyProfile)$, publisher 3 is ranked second and $\strategyProfile_3$ is the global maximizer of $\scoreFunc_{xQuAD}(\strategy; \aspectDistribution, \strategyProfile_1)$:
        If both publishers have the same utility values and publisher 3 plays the unique global-maximum strategy, then publisher 2 must also play the unique global-maximum strategy.
        This contradicts the fact that publisher 3 is ranked second, as in this strategy profile, publisher 2 will be ranked second due to the tie-breaker.
        \item If $\publisherUtilityFunction_2(\strategyProfile) = \publisherUtilityFunction_3(\strategyProfile)$, publisher 3 is ranked second and $\strategyProfile_3$ is \textbf{not} the global maximizer of $\scoreFunc_{xQuAD}(\strategy; \aspectDistribution, \strategyProfile_1)$:
        If publisher 3 is ranked second, then her score function value for the second position must be higher than publisher 3's score function value for the second position. Therefore, as $\strategyProfile_3$ is not a local maximum of the second-place utility function, and $\scoreFunc_{xQuAD}(\strategy; \aspectDistribution, \strategyProfile_1)$ is continuous, publisher 3 can move a little step of size $\epsilon$ to increase her utility, while keeping her score greater than publisher 2 score, and this is a profitable deviation.
    \end{itemize}

    We have shown that in every case there is a profitable deviation to at least one of the publishers and therefore we reached a contradiction and there is no Nash Equilibrium in game $\game$.
\end{proof}

\subsection{Observation~\ref{obs:xmmr copycat zero}}\label{proof obs:xmmr copycat zero}
In order to prove the observation, we will prove a more general lemma:
\begin{lemma}\label{lem:aux xmmr non zero strategy}
    Let $\game$ be \modelName with the novelty-based utility function 
    
    \[
        \clickModel^{novelty}(\rankIndex, \simAllPublishersRank) = \simLocation \cdot \minOn{\rankIndex' < \rankIndex}{{(\simLocation - \simLocationTag)}^2},
    \]
    
    and a ranking function $\rankFunction$ that is based on some score function $\scoreFunc$ with a lexicographic tie-breaker.
    In addition, let $\publisherIndex$ be some publisher and let $\strategiesOthers$ be some strategies for the other publishers.
    If there is a publisher $\publisherIndex' < \publisherIndex$ such that $\strategy = \strategyTag$ then $\publisherUtility(\strategy, \strategiesOthers) = 0$. In addition, there is some strategy $\strategyTilde$ with positive utility, i.e., $\publisherUtility^{xMMR}(\strategyTilde, \strategiesOthers) > 0$.
\end{lemma}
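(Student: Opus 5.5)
The argument splits into the two claims of Lemma~\ref{lem:aux two aspects same weight}'s successor statement above: a zero-utility claim for copies, and the existence of a strategy with positive utility.

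\emph{Zero utility for copies.} Suppose $\publisherIndex' < \publisherIndex$ and $\strategy = \strategyTag$. Since the documents are identical, $\scoreFunc(\strategy,\aspectDistribution,\partialRank) = \scoreFunc(\strategyTag,\aspectDistribution,\partialRank)$ for every partial ranking $\partialRank$ in which both are still unranked. I will argue that at every step of Algorithm~\ref{alg:score_func} at which $\publisherIndex$ could be selected, $\publisherIndex'$ has the same score. The lexicographic tie-breaker then always prefers $\publisherIndex'$, so $\publisherRankLocationTag < \publisherRankLocation$. This step is routine but needs care, because the argument only uses that the score depends on the document's embedding and on $\partialRank$, not on its index. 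With $\publisherIndex'$ ranked above $\publisherIndex$, the minimum in $\clickModel^{novelty}$ includes the term $(\simPublisher - \simPublisherTag)^2 = 0$ for every aspect $\aspect$. Hence the user utility is $0$ pointwise in $\aspect$, and so is its expectation.

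\emph{Positive utility.} Fix $\strategiesOthers$. The difficulty is that the rank $\publisherRankLocation$ induced by a candidate $\strategyTilde$ depends on the arbitrary score $\scoreFunc$. So I will not control the rank; instead I will choose $\strategyTilde$ so that the utility is positive at \emph{whatever} position it lands.

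The utility at position $\rankIndex$ is $\aspectEShort{\simFunction(\strategyTilde,\aspectRep)\cdot \min_{\rankIndex'<\rankIndex}(\simFunction(\strategyTilde,\aspectRep)-\simLocationTag)^2}$, where the minimum ranges over a subset of the other $\numPublishers-1$ documents, and the empty minimum is $1$. This is therefore bounded below by
\[
\aspectEShort{\simFunction(\strategyTilde,\aspectRep)\cdot \min_{\publisherIndex''\neq\publisherIndex}(\simFunction(\strategyTilde,\aspectRep)-\simPublisherTagTag)^2}
\]
(taking the minimum together with $1$). It suffices to find $\strategyTilde$ for which this lower bound is positive.

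Pick an aspect $\aspect^*$ of positive probability mass, or a positive-measure neighbourhood of aspects in the continuous case. Choose $\strategyTilde$ close to $\aspectRepSpecific{*}$ so that $\simFunction(\strategyTilde,\aspectRepSpecific{*}) > 0$. The quantity $\simFunction(\strategyTilde,\aspectRepSpecific{*})$ takes a continuum of values as $\strategyTilde$ moves along a segment, while the other documents contribute only finitely many values $\simPublisherTagTag$ to avoid. So some $\strategyTilde$ in the segment has $\simFunction(\strategyTilde,\aspectRepSpecific{*})$ different from all of them and positive. The integrand is then positive at $\aspect^*$ and nonnegative everywhere, which makes the expectation positive.

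For continuous $\aspectDistribution$, I would use continuity of the integrand in $\aspectRep$ to extend positivity to a neighbourhood of positive measure. I expect this measure-theoretic detail, together with making the uniform-over-ranks lower bound rigorous, to be the main obstacle; the remaining steps are direct.
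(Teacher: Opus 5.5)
Your proposal is correct and follows essentially the same route as the paper. In the first part, identical documents tie at every step, so the lexicographic rule places $\publisherIndex'$ above $\publisherIndex$ and the novelty minimum vanishes. In the second part, choosing $\strategyTilde$ whose similarity to a fixed aspect is positive and differs from the finitely many other documents' similarities makes the minimum positive at whatever rank $\publisherIndex$ lands. Your continuity argument for a continuous $\aspectDistribution$ (taking an aspect in the support) also fills a small gap the paper leaves open, since its proof relies on $\aspectProbSpecific{1} > 0$, which holds only when that aspect is an atom.
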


\begin{proof}
    Let $\game$ be \modelName with the novelty-based utility function, and a ranking function $\rankFunction$ that is based on some score function $\scoreFunc$ with a lexicographic tie-breaker.
    In addition, let $\publisherIndex$ be some publisher and let $\strategiesOthers$ be some strategies for the other publishers.
    
    \paragraph{Part 1 - utility calculation}
    Let $\publisherIndex'$ be a publisher such that $\publisherIndex' < \publisherIndex$ and $\strategy = \strategyTag$.
    Notice that for every score function, due to the lexicographic tie-breaker $\publisherRankLocation > \publisherRankLocationTag$. Therefore:
    \begin{align*}
        \publisherUtility^{xMMR}(\strategy, \strategiesOthers) 
        & =       
        \aspectEShort{\simPublisher \cdot \minOn{\rankIndex' < \publisherRankLocation}{(\simPublisher - \simLocationTag)^2}}
        \\
        & \leq \aspectEShort{\simPublisher \cdot (\simPublisher - \simPublisherTag)^2} = \aspectEShort{\simPublisher \cdot 0} = 0.
    \end{align*}
    
    \paragraph{Part 2 - equilibrium condition}
    Fix some aspect $\aspect_1 \in \aspectDistributionSupp$. In our model, $\simFunction(\strategy, \aspectRep) = 1 - \frac{1}{\dimE}\norm{\strategy-\aspectRep}^2$ therefore we know that after we subtract from $\embeddingSpace$ the strategies for which $\simFunction_{\publisherIndex}^{\aspect_1} \in \set{\simFunction_{\publisherIndex'}^{\aspect_1} \given \publisherIndex' \neq \publisherIndex}$, we still have an infinite number of possible strategies with $\simFunction_{\publisherIndex}^{\aspect_1} > 0$. We fix one of them and denote it by $\strategy$.
    
    \begin{align*}
        \publisherUtility^{xMMR}(\strategy, \strategiesOthers)
        & =       
        \aspectEShort{\simPublisher \cdot \minOn{\rankIndex' < \publisherRankLocation}{(\simPublisher - \simLocationTag)^2}}
        \\
        & \geq \aspectProbSpecific{1} \cdot \simPublisherSpecific{1} \cdot \minOn{\rankIndex' < \publisherRankLocation}{(\simPublisherSpecific{1} - \simLocationTag)^2}
        \\
        & > 0,
    \end{align*}
    where the last transition is based on that $\aspectProbSpecific{1} > 0$, and we chose $\strategy$ to be a strategy for which $\simPublisherSpecific{1}$ is positive and $\simPublisherSpecific{1} \notin \set{\simFunction_{\publisherIndex'}^{\aspect_1} \given \publisherIndex' \neq \publisherIndex}$ (if all terms in the min are positive, the min is also positive).
    
    Therefore, publisher $\publisherIndex$ has a strategy $\strategy$ with positive utility.

\end{proof}

We now proceed with the proof of the observation.

\begin{proof}
    Let $\game$ be an xMMR \modelName, and let $\publisherIndex > \publisherIndex'$ be two publishers. We fix some $\strategiesOthers$ to be the strategies for all publishers except publisher $\publisherIndex$.

    \paragraph{Claim 1}
    By Auxiliary Lemma~\ref{lem:aux xmmr non zero strategy}, if $\strategy = \strategyTag$ then $\publisherUtility^{xMMR}(\strategy, \strategiesOthers) = 0$.

    \paragraph{Claim 2}
    Let $\strategyProfile$ be some NE. Suppose by contradiction that $\strategy = \strategyTag$. Then, as we showed, $\publisherUtility^{xMMR}(\strategy, \strategiesOthers) = 0$. In addition, by Auxiliary Lemma~\ref{lem:aux xmmr non zero strategy} there is some strategy $\strategyTilde$ such that $\publisherUtility^{xMMR}(\strategyTilde, \strategiesOthers) > 0$. Therefore, publisher $\publisherIndex$ has a profitable deviation, $\strategyProfile$ is not a Nash equilibrium, and we reached a contradiction.
\end{proof}

\subsection{Lemma~\ref{lem:xmmr first dominant}}\label{proof lem:xmmr first dominant}
\begin{proof}
    Let $\game$ be an xMMR game, let $\strategyProfile_{-1}$ be strategies of all publishers except publisher 1, and let $\strategyProfile_1,\strategyProfile_1' \in \embeddingSpace$ such that $\strategyProfile_1 = \aspectEShort{\aspectRep}$ and $\strategyProfile_1' \neq \aspectEShort{\aspectRep}$.

    In Proposition~\ref{pro:sim maximizer} we showed that $\aspectEShort{\simPublisher}$ is uniquely maximized by $\aspectEShort{\aspectRep}$, so by playing $\aspectEShort{\aspectRep}$, publisher 1 can ensure the first position, as she wins in the lexicographic tie breaker in case of a tie. In addition, we denote the rank of publisher 1 when playing $\strategyProfile_1'$ by $\publisherRankLocation'$. Based on the utility function of the xMMR game,
    \begin{align*} 
        \publisherUtilityFunction_1(\strategyProfile_1, \strategyProfile_{-1}) & = \argmaxOn{\strategyProfile_1}{\aspectEShort{\simFunction(\strategyProfile_1, \aspectRep)}} >  \aspectEShort{\simFunction(\strategyProfile_1', \aspectRep)}
        \\ & \geq
        \aspectEShort{\simFunction(\strategyProfile_1', \aspectRep) \cdot \minOn{\rankIndex' < \publisherRankLocation'}{(\simFunction^\aspect_1 - \simLocationTag)^2}} = \publisherUtilityFunction_1(\strategyProfile_1', \strategyProfile_{-1}).
    \end{align*}
    
    Therefore, by definition, the \relSt of playing $\aspectEShort{\aspectRep}$ is a dominant strategy for publisher 1.
\end{proof}

\subsection{Lemma~\ref{lem:xquad uir unique pne}}\label{proof lem:xquad uir unique pne}
Before we prove the lemma, we will prove an auxiliary technical lemma, which generalizes Proposition~\ref{pro:sim maximizer}.
\begin{lemma}\label{lem:aux unique xquad expression}
    Let $\aspectDistribution$ be some distribution such that $\aspectDistributionSuppSize \geq 2$ and let $\partialRank$ be some partial ranking (can be $\emptyset$). If there is one or more aspects in the distribution support such that $\underset{\strategyTag \in \partialRank}{\Pi} \big(1 - \simPublisherTag\big) > 0$, then the \textbf{unique} maximizer of the UIR-xQuAD score function is a weighted average of those aspects' representations when the weights are $\aspectWeightAdjusted = \aspectProb \cdot \underset{\strategyTag \in \partialRank}{\Pi} \big(1 - \simPublisherTag\big)$.
\end{lemma}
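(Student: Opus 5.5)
The plan is to reduce the maximization of the UIR-xQuAD score to a weighted least-squares problem, exactly as in the proof of Proposition~\ref{pro:sim maximizer}. Fix the partial ranking $\partialRank$. For each aspect $\aspect$, the product $\underset{\strategyTag \in \partialRank}{\Pi} \big(1 - \simPublisherTag\big)$ depends only on the already-ranked documents and not on $\strategy$. Hence the score can be written as a weighted sum (or integral)
\[
\scoreFunc_{UIR-xQuAD}(\strategy, \aspectDistribution, \partialRank) = \sum_{\aspect} \aspectWeightAdjusted \, \simPublisher ,
\]
with nonnegative weights $\aspectWeightAdjusted = \aspectProb \cdot \underset{\strategyTag \in \partialRank}{\Pi} \big(1 - \simPublisherTag\big)$. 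For a continuous distribution, the sum is replaced by an integral against the density. By hypothesis, the total weight $W^* \coloneqq \sum_\aspect \aspectWeightAdjusted$ is strictly positive.

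Next I substitute $\simPublisher = 1 - \frac{1}{\dimE}\norm{\strategy - \aspectRep}^2$. This gives
\[
\scoreFunc = W^* - \frac{1}{\dimE}\sum_\aspect \aspectWeightAdjusted \norm{\strategy - \aspectRep}^2 .
\]
Maximizing the score is therefore equivalent to minimizing $g(\strategy) = \sum_\aspect \aspectWeightAdjusted \norm{\strategy - \aspectRep}^2$. Expanding around the weighted mean $\bar{x} = \frac{1}{W^*}\sum_\aspect \aspectWeightAdjusted \aspectRep$ gives the bias--variance identity
\[
g(\strategy) = W^*\norm{\strategy - \bar{x}}^2 + \sum_\aspect \aspectWeightAdjusted \norm{\bar{x} - \aspectRep}^2 .
\]
Since $W^* > 0$, this is a strictly convex function of $\strategy$, and it is uniquely minimized at $\strategy = \bar{x}$.

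Two remaining points must be checked. First, $\bar{x}$ is a convex combination of points in $[0,1]^\dimE$, so it lies in $\embeddingSpaceBase$ and the unconstrained minimizer is feasible. Second, aspects whose product term vanishes get weight zero, so $\bar{x}$ is exactly the weighted average of only those aspects with a positive product, as the statement claims. The assumption $\aspectDistributionSuppSize \geq 2$ is not needed for uniqueness; it matters later, in the proof of Lemma~\ref{lem:xquad uir unique pne}, to guarantee that the positivity hypothesis holds at every step.

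I expect no real obstacle. The only delicate step is to state the uniqueness argument through strict convexity, that is, through the positive coefficient $W^*$ on $\norm{\strategy - \bar{x}}^2$. The argument must not rely on a naive normalization, because when $\partialRank$ contains documents equal to some aspect representations, some weights vanish, and only the positivity of the total weight $W^*$ is available.
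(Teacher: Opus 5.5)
Your proposal is correct and follows essentially the same route as the paper: write the score as $\sum_a w_a^* s_i^a$ with weights independent of $x_i$, substitute the similarity function, and reduce to minimizing a weighted sum of squared distances, whose unique minimizer is the weighted mean of the aspects with positive weight. Two of your steps make precise points the paper handles loosely: the bias--variance identity proves the strict-convexity claim the paper only asserts, and you check explicitly that the weighted mean lies in $[0,1]^k$, which the paper leaves implicit.
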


\begin{proof}
    To prove the lemma, let us denote by $\baseWeightSetAdjusted_{real}$ the set of aspects $\aspectRep$ for which $\aspectWeightAdjusted > 0$. Our goal is to show that the unique maximizer of the UIR-xQuAD score is the weighted average of the aspects in $\baseWeightSetAdjusted_{real}$ with weights $\aspectWeightAdjusted$.

    \begin{align*}
        \scoreFunc\rb{\strategy, \aspectDistribution, \partialRank} &
        = \aspectEShort{\simPublisher \cdot \underset{\strategyTag \in \partialRank}{\Pi} \big(1 - \simPublisherTag\big)}
        \\ &
        = \sum_{\aspect \in \aspectDistributionSupp} \aspectProb \cdot \underset{\strategyTag \in \partialRank}{\Pi} \big(1 - \simPublisherTag\big) \cdot \simPublisher
        \\ &
        = \sum_{\aspect \in \baseWeightSetAdjusted_{real}} \aspectWeightAdjusted \cdot \simPublisher
        \\ &
        = \sum_{\aspect \in \baseWeightSetAdjusted_{real}} \aspectWeightAdjusted (1 - \frac{1}{\dimE}\norm{\strategy - \aspectRep}^2)
    \end{align*}

    Thus,
    \begin{align*}
        \argmaxOn{\strategy \in \embeddingSpaceBase}{\scoreFunc\rb{\strategy, \aspectDistribution, \partialRank}} &
        = \argmaxOn{\strategy \in \embeddingSpaceBase}{\sum_{\aspect \in \baseWeightSetAdjusted_{real}} \aspectWeightAdjusted (1 - \frac{1}{\dimE}\norm{\strategy - \aspectRep}^2)}
        \\ &
        = \argminOn{\strategy \in \embeddingSpaceBase}{\sum_{\aspect \in \baseWeightSetAdjusted_{real}} \aspectWeightAdjusted \norm{\strategy - \aspectRep}^2}
        \\ &
        = \frac{\sum_{\aspect \in \baseWeightSetAdjusted_{real}} \aspectWeightAdjusted \aspectRep}{\sum_{\aspect \in \baseWeightSetAdjusted_{real}} \aspectWeightAdjusted},
    \end{align*}
    where the last equality holds because the weighted squared distance is strictly convex in $\strategy$, ensuring a unique minimizer.
\end{proof}

We now proceed with the proof of the main lemma.

\begin{proof}
    Let $\game$ be a UIR-xQuAD game with $\aspectDistributionSuppSize \geq 2$.
    
    \paragraph{Part 1 - unique equilibrium}
    We will prove that $\game$ admits a unique Nash equilibrium.

    In this game, the user utility function is $\clickModel^{coverage}(\rankIndex, \simAllPublishersRank) = \simLocation \cdot \underset{\rankIndex' < \rankIndex}{\Pi} \big(1 - \simLocationTag\big)$. Notice that the user utility function is monotone as $0 \leq (1 - \simFunction(\strategy, \aspectRep)) \leq 1$. In addition, it is easy to see that $\clickModel(\rankIndex, \simAllPublishersRank)$ does not depend on $\simAllPublishersRank[\rankIndex']$ for $\rankIndex' > \rankIndex$. Therefore, the UIR score function conditions hold, and UIR-xQuAD is well-defined. What remains to show is that the maximizer of each expression $\scoreFunc\rb{\strategy, \aspectDistribution, \set{\strategyTag \given \publisherIndex' < \publisherIndex}}$ is unique.

    Suppose for contradiction that there is one or more publishers for which the maximizer of the expression is not unique.
    Let $\publisherIndex$ be the first index such that in the recursive formula, there is more than $1$ maximizer of the expression. For every $\publisherIndex' < \publisherIndex$, we will fix $\strategyTag$ to be the unique maximizer of the expression.
    In our case,
    \begin{equation*}
        \scoreFunc\rb{\strategy, \aspectDistribution, \set{\strategyTag \given \publisherIndex' < \publisherIndex}} = \aspectEShort{\simPublisher \cdot \underset{\publisherIndex' < \publisherIndex}{\Pi} \big(1 - \simPublisherTag\big)}.
    \end{equation*}

    We will now split into two cases base on $\cb{\underset{\publisherIndex' < \publisherIndex}{\Pi} \big(1 - \simPublisherTag\big)}_{\aspect \in \aspectDistributionSupp}$.

    \paragraph{Part 1, case 1:} 
    For every aspect $\aspect$, $\underset{\publisherIndex' < \publisherIndex}{\Pi} \big(1 - \simPublisherTag\big) = 0$.

    Notice that for $\simFunction(\strategyTag, \aspectRep) = 1 - \frac{1}{\dimE}\norm{\strategyTag - \aspectRep}^2$ to be $0$, $\strategyTag$ must be equal to $\aspectRep$. If $\underset{\publisherIndex' < \publisherIndex}{\Pi} \big(1 - \simPublisherTag\big)$ is $0$ for some aspect $\aspect$, then there must be some publisher $\publisherIndex'$ such that $\strategyTag = \aspectRep$. Notice that this means that $\aspectDistributionSuppSize < \publisherIndex$, and that the distribution is necessarily finite.
    
    We now focus on the \textbf{last} publisher (lexicographically) for which \textbf{two} of these products were positive, and denote her by $\publisherIndex'$. As $\aspectDistributionSuppSize \geq 2$, there must be such a publisher.
    From this publisher's perspective, there are $2$ aspects such that $\underset{\publisherIndex'' < \publisherIndex'}{\Pi} \big(1 - \simPublisherTagTag) > 0$. We will denote them by $\aspect_1, \aspect_2$. Using Lemma~\ref{lem:aux unique xquad expression}, we can discover that:    
    \begin{equation*}
        \frac{\aspectWeightAdjustedSpecific{1} \aspectRepOne + \aspectWeightAdjustedSpecific{2} \aspectRepTwo}{\aspectWeightAdjustedSpecific{1} + \aspectWeightAdjustedSpecific{2}}
        = \argmaxOn{\strategyTag}{\aspectEShort{\simPublisherTag \cdot \underset{\publisherIndex'' < \publisherIndex'}{\Pi} \big(1 - \simPublisherTagTag\big)}},
    \end{equation*} when $\aspectWeightAdjusted = \aspectProb \cdot \underset{\publisherIndex'' < \publisherIndex'}{\Pi} \big(1 - \simPublisherTagTag\big)$.

    Notice that the resulting $\strategyTag \notin \cb{\aspectRepOne, \aspectRepTwo}$, which contradicts the assumption that she is the \textbf{last} publisher (lexicographically) for which \textbf{two} of these products were positive. This means that the first case is not possible.

    \paragraph{Part 1, case 2:} At least one aspect satisfies
    $\underset{\publisherIndex' < \publisherIndex}{\Pi} \big(1 - \simPublisherTag\big) > 0$.
    
    By Lemma~\ref{lem:aux unique xquad expression}, the UIR-xQuAD score function expression for publisher $\publisherIndex$ has a unique maximizer, and we reached a contradiction. Therefore, the maximizer of each expression is unique.

    After we prove that each $\scoreFunc\rb{\strategy, \aspectDistribution, \set{\strategyTag \given \publisherIndex' < \publisherIndex}}$ expression has one maximizer, we can use the additional case of Theorem~\ref{the:direct pne}, which states that $\game$ has one unique equilibrium defined by the recursive formula.

    \paragraph{Part 2 - the symmetric case}
    We will show that if $\aspectDistribution$ is a symmetric distribution, then the \relSt of playing $\aspectEShort{\aspectRep}$ is a dominant strategy for each publisher. The proof of this part is similar to the proof of Theorem~\ref{the:xquad lex dominant}.
    
    Assume that $\aspectDistribution$ is symmetric. Fix some publisher $\publisherIndex \in \publishersSet$. Let $\strategiesOthers \in \prod_{\publisherIndex' \neq \publisherIndex} \embeddingSpaceSpecific{\publisherIndex'}$ be some strategies of all players except player $\publisherIndex$, $\strategy^1 = \aspectEShort{\aspectRep}$ and $\strategy^2 \neq \aspectEShort{\aspectRep}$.

    When we proved Lemma~\ref{lem:xquad lex relevance better rank}, we used a more general lemma, Lemma~\ref{lem:xquad general relevance better rank}, that allows for $\lambda \in \cb{0, 1}$. Notice that if we take the standard xQuAD score function, but with $\lambda = 0$ we get the UIR-xQuAD score function, which allows us to use the general lemma in our proof as well.
    
    According to Lemma~\ref{lem:xquad general relevance better rank}, every publisher positioned above publisher $\publisherIndex$ played $\aspectEShort{\aspectRep}$. In addition, according to this lemma, if publisher $\publisherIndex$ will deviate to $\strategy^2$, her position will not improve, as she can not be ranked above a publisher who plays $\aspectEShort{\aspectRep}$. We denote the rank of publisher $\publisherIndex$ when playing $\strategy^1$ by $\publisherRankLocation^1$ and the of publisher $\publisherIndex$ when playing $\strategy^2$ by $\publisherRankLocation^2$ and we get that $\publisherRankLocation^1 \leq \publisherRankLocation^2$.
    Therefore:
    \begin{align*}
        \publisherUtility(\strategy^2, \strategiesOthers) & = \aspectEShort{\simFunction(\strategy^2, \aspectRep) \cdot \underset{\rankIndex' < \publisherRankLocation^2}{\Pi} \big(1 - \simLocationTag\big)}
        \\ & \overset{(1)}{\leq} 
        \aspectEShort{\simFunction(\strategy^2, \aspectRep) \cdot \underset{\rankIndex' < \publisherRankLocation^1}{\Pi} \big(1 - \simLocationTag\big)}
        \\ & \overset{(2)}{<} 
        \aspectEShort{\simFunction(\strategy^1, \aspectRep) \cdot \underset{\rankIndex' < \publisherRankLocation^1}{\Pi} \big(1 - \simLocationTag\big)} = \publisherUtility(\strategy^1, \strategiesOthers)
    ,\end{align*}
    where transition~(1) is based on that $\publisherRankLocation^1 \leq \publisherRankLocation^2$ and that all the components are non-negative, and transition~(2) is Lemma~\ref{lem:xquad recursive maximizer}.

    As a result, by definition, the \relSt of playing $\aspectEShort{\aspectRep}$ is a dominant strategy for every publisher $\publisherIndex$.
    
    \paragraph{Part 3 - the non-symmetric case}
    We will show that if $\aspectDistribution$ is not a symmetric distribution and $\aspectDistributionSuppSize = 2$, then under the equilibrium, $\strategyProfile_1 \neq \strategyProfile_2$.

    We will denote the $2$ aspects by $\aspect_1$ and $\aspect_2$.
    For publisher 1, $\aspectWeightAdjusted = \aspectProb \cdot \underset{\strategyTag \in \emptyset}{\Pi} \big(1 - \simPublisherTag\big) = \aspectProb$
    and for publisher 2, $\aspectWeightAdjusted = \aspectProb \cdot \underset{\strategyTag \in \cb{\strategyProfile_1}}{\Pi} \big(1 - \simPublisherTag\big) = \aspectProb (1 - \simPublisherSpecific{1})$.
    
    Therefore, using Lemma~\ref{lem:aux unique xquad expression}, we can deduce that:
    \begin{gather*}
        \strategyProfile_1 = \aspectProbSpecific{1} \cdot \aspectRepOne + \aspectProbSpecific{2} \cdot \aspectRepTwo = \aspectEShort{\aspectRep}
        \\
        \strategyProfile_2 = \frac{\sum_{\aspect \in \cb{\aspect_1, \aspect_2}}\big(\aspectProb \norm{\aspectEShort{\aspectRep} - \aspectRep}^2 \big) \cdot \aspectRep}
        {\sum_{\aspect \in \cb{\aspect_1, \aspect_2}}\big(\aspectProb \norm{\aspectEShort{\aspectRep} - \aspectRep}^2 \big)}.
    \end{gather*}
    
    In order to show that $\strategyProfile_1 \neq \strategyProfile_2$, we just need to show the weights in both expressions are different, i.e.,
    \begin{equation*}
        \aspectProbSpecific{1} \neq \frac{\aspectProbSpecific{1} \norm{\aspectEShort{\aspectRep} - \aspectRepOne}^2}
        {\sum_{\aspect \in \cb{\aspect_1, \aspect_2}}\big(\aspectProb \norm{\aspectEShort{\aspectRep} - \aspectRep}^2 \big)}.
    \end{equation*}
    Suppose for contradiction that they are equal. Notice that both $\norm{\aspectEShort{\aspectRep} - \aspectRepOne}^2$ and $\norm{\aspectEShort{\aspectRep} - \aspectRepTwo}^2$ are positive.
    In this case:
    \begin{gather*}
        \aspectProbSpecific{1}
        =
        \frac{1}
        {1 + \frac{\aspectProbSpecific{2} \norm{\aspectEShort{\aspectRep} - \aspectRepTwo}^2}{\aspectProbSpecific{1} \norm{\aspectEShort{\aspectRep} - \aspectRepOne}^2}}
        \implies
        \\
        \aspectProbSpecific{1} + \frac{\aspectProbSpecific{2} \norm{\aspectEShort{\aspectRep} - \aspectRepTwo}^2}{ \norm{\aspectEShort{\aspectRep} - \aspectRepOne}^2}
        =
        1
    \end{gather*}

    We know that $\aspectProbSpecific{1} + \aspectProbSpecific{2} = 1$, so if the equivalence is true, $\norm{\aspectEShort{\aspectRep} - \aspectRepOne}^2$ must be equal to $\norm{\aspectEShort{\aspectRep} - \aspectRepTwo}^2$. By Auxiliary Lemma~\ref{lem:aux two aspects same weight}, the aspect distribution must be symmetric, and we reached a contradiction. Therefore, $\strategyProfile_1 \neq \strategyProfile_2$.
\end{proof}

\subsection{Lemma~\ref{lem:xmmr uir pne}}\label{proof lem:xmmr uir pne}
\begin{proof}
    Let $\game$ be a UIR-xMMR game.

    \paragraph{Part 1 - equilibrium existence}
    We start by showing that $\game$ admits at least one Nash equilibrium.
    In this game, the user utility function is $\clickModel^{novelty}(\rankIndex, \simAllPublishersRank) = \simLocation \cdot \minOn{\rankIndex' < \rankIndex}{{(\simLocation - \simLocationTag)}^2}$. Notice that the user utility function is monotone, as adding more elements to the minimum aggregation can only make it smaller. In addition, it is easy to see that $\clickModel(\rankIndex, \simAllPublishersRank)$ does not depend on $\simAllPublishersRank[\rankIndex']$ for $\rankIndex' > \rankIndex$. Therefore, the UIR score function conditions hold, and UIR-xMMR is well-defined. By Theorem~\ref{the:direct pne}, the strategy profile described by the recursive formula is a Nash equilibrium.

    \paragraph{Part 2 - strategies uniqueness}
    Suppose by contradiction that there is some Nash equilibrium profile $\strategyProfile$ and $2$ publishers $\publisherIndex \neq \publisherIndex'$ such that $\strategyProfile_\publisherIndex = \strategyProfile_{\publisherIndex'}$. Without loss of generality, assume that $\publisherIndex > \publisherIndex'$.
    
    By Auxiliary Lemma~\ref{lem:aux xmmr non zero strategy}, $\publisherUtility(\strategy, \strategiesOthers) = 0$ and there is some strategy $\strategyTilde$ such that $\publisherUtility^{xMMR}(\strategyTilde, \strategiesOthers) > 0$.
    Therefore, publisher $\publisherIndex$ has a profitable deviation, $\strategyProfile$ is not a Nash equilibrium, and we reached a contradiction.
\end{proof}

\section{Example}\label{app:example}

We now present a detailed explanation of the argument presented in Example~\ref{exa:xmmr no pne}, explicitly indicating the points where we used numerical methods implemented in the code.
During the example, we will use the definition of best response as presented in Definition~\ref{def:best response}.

Let $\game$ be an xMMR game with $\numPublishers = 3, \dimE = 1, \lambda = 0.5$ and the symmetric aspect distribution defined by $\aspectProb = 0.5 \cdot \ind(\aspectRep \in \cb{0, 0.1})$. Before we begin, note that the distribution we have presented satisfies the condition equivalent to symmetry (Auxiliary Lemma~\ref{lem:aux two aspects same weight}) and is therefore symmetric.

    Suppose for contradiction that there is some profile of $\strategyProfile = (\strategyProfile_1, \strategyProfile_2, \strategyProfile_3)$ which is NE.
    By Lemma~\ref{lem:xmmr first dominant}, publisher 1 has a dominant strategy of playing $\aspectEShort{\aspectRep}$ so $\strategyProfile_1$ must be $\aspectEShort{\aspectRep} = 0.05$. Notice that $\simFunction^{\aspect_1}_1 = \simFunction^{\aspect_1}_2 = 1 - 0.05^2 = \frac{399}{400} = 0.9975$.

    We start by finding the score function for the second-place comparison and the utility function of the second-ranked publisher:
    \begin{align*}
        \scoreFunc_{xMMR} & (\strategy; \aspectDistribution, \strategyProfile_1) =
        \frac{\simFunction^{\aspect_1}_\publisherIndex +  {(\simFunction^{\aspect_1}_\publisherIndex - \simFunction^{\aspect_1}_1)}^2}{4} + 
        \frac{\simFunction^{\aspect_2}_\publisherIndex + {(\simFunction^{\aspect_2}_\publisherIndex  - \simFunction^{\aspect_2}_1)}^2}{4}
        \\ & =
        \frac{1}{4} \Bigr( (1 - (\strategyProfile_{\rankLocationSpecific{2}})^2) + \big(1 - (\strategyProfile_{\rankLocationSpecific{2}})^2 - \frac{399}{400}\big)^2 \Bigr) + 
        \\ & \quad \quad +
        \frac{1}{4} \Bigr( (1 - (\strategyProfile_{\rankLocationSpecific{2}} - 0.1)^2) + \big(1 - (\strategyProfile_{\rankLocationSpecific{2}} - 0.1)^2 - \frac{399}{400}\big)^2 \Bigr)
        \\ & =
        \tfrac{1}{2}(\strategyProfile_{\rankLocationSpecific{2}})^4 - \tfrac{1}{10}(\strategyProfile_{\rankLocationSpecific{2}})^3 - \tfrac{39}{80}(\strategyProfile_{\rankLocationSpecific{2}})^2 + \tfrac{197}{4000}\strategyProfile_{\rankLocationSpecific{2}} + \tfrac{31841}{64000}
    ,\end{align*}
    \begin{align*}
        \publisherUtilityFunction_{\rankLocationSpecific{2}} & (\strategyProfile_{\rankLocationSpecific{2}}; \strategyProfile_1) = 
        \frac{\simFunction^{\aspect_1}_{\rankLocationSpecific{2}} \cdot {(\simFunction^{\aspect_1}_{\rankLocationSpecific{2}}  - \simFunction^{\aspect_1}_1)}^2}{2} + 
        \frac{\simFunction^{\aspect_2}_{\rankLocationSpecific{2}} \cdot {(\simFunction^{\aspect_2}_{\rankLocationSpecific{2}}  - \simFunction^{\aspect_2}_1)}^2}{2}
        \\ & =
        \frac{1}{2} (1 - (\strategyProfile_{\rankLocationSpecific{2}})^2)\big(1 - (\strategyProfile_{\rankLocationSpecific{2}})^2 - \frac{399}{400}\big)^2 
        \\ & \quad \quad +
        \frac{1}{2} (1 - (\strategyProfile_{\rankLocationSpecific{2}} - 0.1)^2)\big(1 - (\strategyProfile_{\rankLocationSpecific{2}} - 0.1)^2 - \frac{399}{400}\big)^2
        \\ & =
        -(\strategyProfile_{\rankLocationSpecific{2}})^6 + \tfrac{3}{10}(\strategyProfile_{\rankLocationSpecific{2}})^5 + \tfrac{93}{100}(\strategyProfile_{\rankLocationSpecific{2}})^4 - \tfrac{191}{1000}(\strategyProfile_{\rankLocationSpecific{2}})^3
        \\ & \quad \quad +
        \tfrac{3903}{160000}(\strategyProfile_{\rankLocationSpecific{2}})^2 - \tfrac{2367}{1600000}\strategyProfile_{\rankLocationSpecific{2}} + \tfrac{991}{32000000}
    .\end{align*}

    \begin{figure}
        \includegraphics[width=\columnwidth,keepaspectratio]{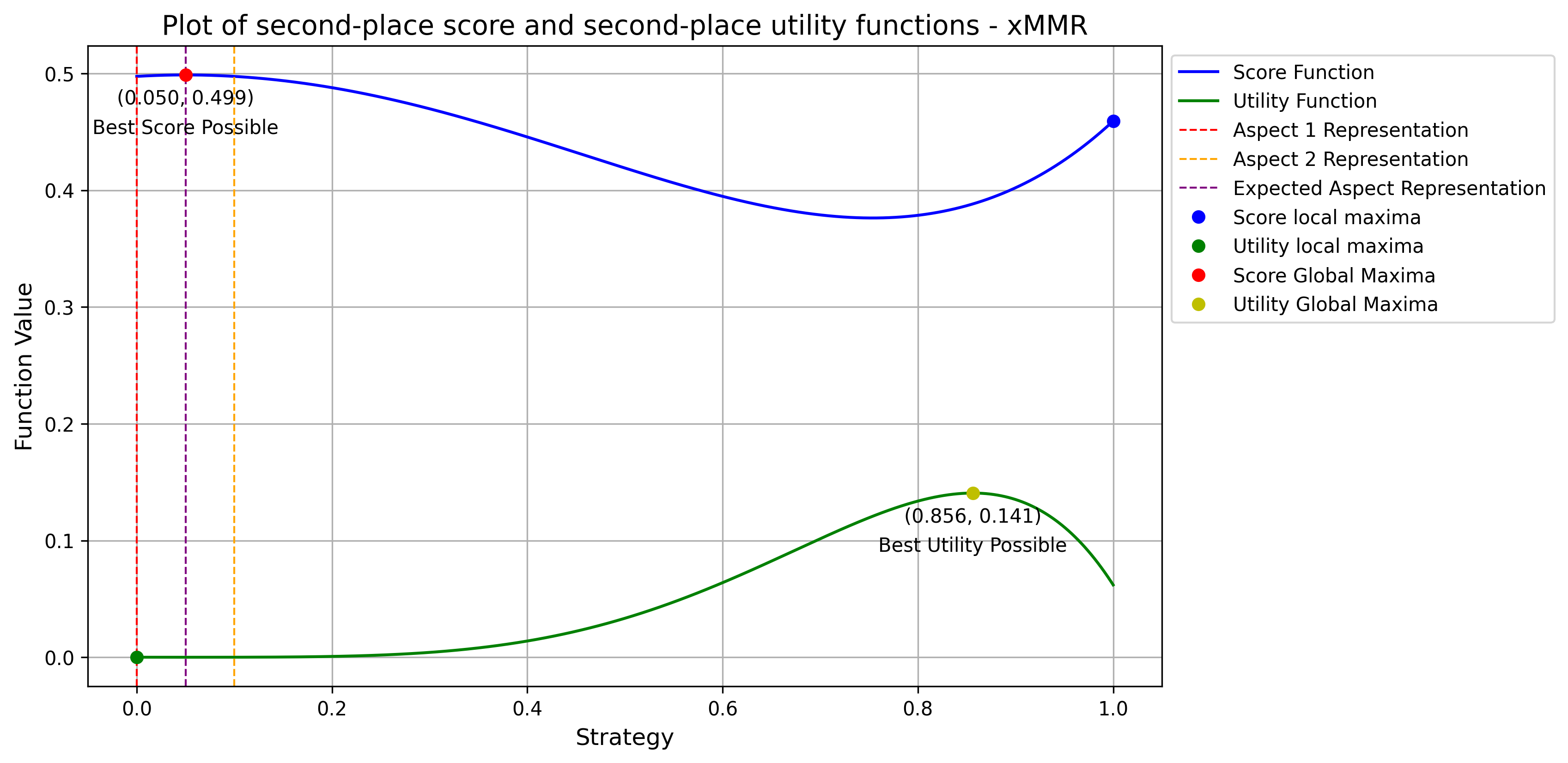}
        \caption{The score function for the second-place comparison and the utility function of the second-ranked publisher for the xQuAD game
        }\label{fig:xmmr second score utility}
        \Description{The figure shows two curves for the xMMR game: one is the score function for the second-place comparison, and another is the utility function of the second-ranked publisher. These functions illustrate how scores and utilities vary with the document representation.}
    \end{figure}

    Figure~\ref{fig:xmmr second score utility} presents the second position score function and the second-ranked publisher utility function, with their respective local and global maximum points. 
    We will denote the strategy that maximizes the utility function for the second-ranked publisher by $\strategyProfile_{\rankLocationSpecific{2}}^{\max}$.
    
    Now, we will split into cases based on the strategies of publishers 2 and 3. For all cases, we will use $\publisherIndex, \publisherIndex' \in \{2, 3 \}, \publisherIndex \neq \publisherIndex'$ to refer to the publishers when we do not want to point to them specifically.
    \begin{itemize}
        \item If $\strategyProfile_2 = \strategyProfile_3$: Publisher 3 will be ranked last according to the lexicographic tie-breaker, and because $\strategyProfile_2 = \strategyProfile_3$, she would get a utility of $0$. If she deviates and plays anything but $\aspectEShort{\aspectRep}$, she will get a positive utility, and that will be a profitable deviation.
        \item If $\publisherUtilityFunction_2(\strategyProfile) < \publisherUtilityFunction_3(\strategyProfile)$: Publisher 2 can improve her utility by playing $\strategyProfile_3$ as in this case she will be ranked second (due to the tie-breaker). If publisher 3 is currently ranked second, then by playing $\strategyProfile_3$ publisher 2 will get utility of $\publisherUtilityFunction_3(\strategyProfile)$, and if publisher 3 is currently ranked third, publisher 2 will get utility bigger or equal to $\publisherUtilityFunction_3(\strategyProfile)$, as the user utility function is monotone.
        \item If $\strategy = 0.05$: By Observation~\ref{obs:xmmr copycat zero} the utility of the publisher is $0$, and she has profitable deviation.
        \item If $\strategyProfile_2 = 1$: We showed using numerical methods that if publisher 3 plays the best response, she gets a utility bigger than the utility of publisher 2, which brings us to case 2. If publisher 3 does not play the best response, then she must have a profitable deviation, which is the best response. 
        \item If $\publisherUtilityFunction_2(\strategyProfile) > \publisherUtilityFunction_3(\strategyProfile)$ and $\strategyProfile_2 \notin \{ 0.05, 1 \}$: The current score of publisher 2 is not local maximum so for every $0 < \epsilon$, there is a strategy in $[\strategyProfile_2 - \epsilon, \strategyProfile_2 + \epsilon] \cap [0, 1]$ which results in a higher score value than the current score of publisher 2 (for the second position). Therefore, if we take $\epsilon \to 0$, for sufficiently small $\epsilon$, publisher 3 can deviate and get a higher score than publisher 2 while having a utility that converges to publisher 2's current utility (as the utility function is continuous if the rank is not changing), which must be a profitable deviation.
        \item If $\publisherUtilityFunction_2(\strategyProfile) = \publisherUtilityFunction_3(\strategyProfile)$, $\scoreFunc_{xMMR}(\strategy; \aspectDistribution, \strategyProfile_1) \neq \scoreFunc_{xMMR}(\strategyTag; \aspectDistribution, \strategyProfile_1)$: Let $\publisherIndex$ be the second-ranked publisher, and let $\publisherIndex'$ be the other publisher. We showed using numerical methods that if $\strategy = 0$ then $\strategyTag$ must be $0.1$ in order to have the same utility function value, but in this case the score values are also equal (which is because the distribution is symmetric), and this is the next case ($\scoreFunc_{xMMR}(\strategy; \aspectDistribution, \strategyProfile_1) = \scoreFunc_{xMMR}(\strategyTag; \aspectDistribution, \strategyProfile_1)$). In addition, $\strategy$ can not be $\strategyProfile_{\rankLocationSpecific{2}}^{\max}$ because no matter what publisher $\publisherIndex'$ plays, her utility will be lower. After ruling out $\strategy \in \{ 0, \strategyProfile_{\rankLocationSpecific{2}}^{\max} \}$, we can deduce that $\strategy$ is not at a local maximum of the utility function. Therefore, as $\scoreFunc_{xMMR}(\strategy; \aspectDistribution, \strategyProfile_1)$ is continuous, publisher $\publisherIndex$ can move a little step of size $\epsilon$ to increase her utility, while keeping her score greater than publisher $\publisherIndex'$ score, and this is a profitable deviation.
        \item If $\publisherUtilityFunction_2(\strategyProfile) = \publisherUtilityFunction_3(\strategyProfile)$, $\scoreFunc_{xMMR}(\strategy; \aspectDistribution, \strategyProfile_1) = \scoreFunc_{xMMR}(\strategyTag; \aspectDistribution, \strategyProfile_1)$, $\strategyProfile_2 \neq \strategyProfile_3$: After narrowing down the possibilities to this case, we used python to find pairs for which this condition hold, and all of the pairs founded are pairs of $[t, 0.1 - t]$, for $t \in [0, 0.1]$ which make sense as both the score function and the utility function of the second ranked publisher are symmetric with respect to $\aspectEShort{\aspectRep} = 0.05$. Let publisher $\publisherIndex$ be the publisher that played $\strategy > 0.05$, and let $\publisherIndex'$ be the other publisher for some pair in which the condition holds. We know that $\aspectRepOne = 0 \leq \strategyTag < \strategyProfile_1 < \strategy \leq 1 = \aspectRepTwo$, therefore $\abs{\simPublisher - \simFunction^{\aspect}_1} < \abs{\simPublisher - \simPublisherTag}$  for both aspects. Notice that the inequality is strong, so there is a step size $0 < \epsilon$ such that the inequality will also hold for $\strategy + \epsilon$. In this case, the utility function of publisher $\publisherIndex$ will be calculated based on publisher 1 similarity values, which means that although $\strategy + \epsilon$ will lead publisher $\publisherIndex$ to third-place, her utility will be like she was second (after publisher 1), therefore, this deviation will increase her utility, as shown in the second-place utility function graph and this is a profitable deviation.
    \end{itemize}

    We have shown that in every case there is a profitable deviation to at least one of the publishers and therefore we reached a contradiction and there is no Nash Equilibrium in game $\game$.

\end{document}